\newtheorem{theorem}{Theorem}
\newtheorem{prop}{Proposition}
\newtheorem{lemma}{Lemma}
\newtheorem{obs}{Observation}
\newtheorem{cor}{Corollary}
\newtheorem{remark}{Remark}
\long\def\comment#1\endcomment{}
\date{}
\begin{document}

\title{A polynomial-time approximation to a minimum dominating set in a graph}

\author{Frank Angel Hern\'andez Mira $^{(1)}$, Ernesto Parra Inza $^{(2)}$, 
\\ Jose Mar\'ia Sigarreta $^{(3)}$ and Nodari Vakhania $^{(2)}$ }

\maketitle

\begin{abstract}
A {\em dominating set} of a graph $G=(V,E)$ is a subset of vertices $S\subseteq V$ such that every vertex $v\in V\setminus S$ has at least one neighbor in $S$. Finding a dominating set with the minimum cardinality in a connected graph $G=(V,E)$ is known to be NP-hard. A polynomial-time  approximation algorithm for this problem, described here, works in two stages. At the first stage a dominant set is generated by a greedy algorithm, and at the second stage this dominating set is purified (reduced). The reduction is achieved by the analysis of the flowchart of the algorithm of the first stage and a special kind of clustering of the dominating set generated at the first stage. The clustering of the dominating set naturally leads to a special kind of a spanning forest of graph $G$, which serves as a basis for the second purification stage.  We expose some types of graphs for which the algorithm of the first stage already delivers an optimal solution and derive sufficient conditions when the overall algorithm constructs an optimal solution. We give three alternative approximation ratios for the algorithm of the first stage, two of which are expressed in terms of solely invariant problem instance parameters, and we also give one additional approximation ratio for the overall two-stage algorithm. The greedy algorithm of the first stage turned out to be essentially the same as the earlier known state-of-the-art algorithms for the set cover and dominating set problem Chv\'atal \cite{chvatal} and Parekh \cite{parekh}. The second purification stage results in a significant reduction of the dominant set created at the first stage, in practice. The practical behavior of both stages was verified for randomly generated problem instances. The computational experiments emphasize
the gap between a solution of Stage 1 and a solution of Stage 2.
\end{abstract}

\small {\it Keywords:} graph, dominating set, approximation ratio, approximation algorithm, time complexity.

\small $^{(1)}$ Centro de Ciencias de Desarrollo Regional, UAGro, \emph{fmira8906@gmail.com;}\\
\small $^{(2)}$ Centro de Investigaci\'on en Ciencias, UAEMor, \emph{eparrainza@gmail.com}, \emph{nodari@uaem.mx}
\small $^{(3)}$ Facultad de Matem\'aticas, UAGro, \emph{josemariasigarretaalmira@hotmail.com}\\

\section{Introduction}

{\bf Problem description.}
One of the most studied problems in combinatorial optimization and graph theory
are covering and partitioning problems in graphs. A subset of vertices in a graph is a 
dominating set if every vertex of that graph which is not in that subset has at 
least one neighbor in that subset. More formally, given a simple connected undirected 
graph $G =(V,E)$ with $|V| = n$ vertices and $|E|= m$ edges, a set $S\subseteq V$ is 
called a \emph{dominating set} of that graph 
if for all $v\in V$ either $v\in S$ or there exists a vertex $u$ in $V\setminus S$ 
such that edge $(v,u) \in E$ (without loss of generality, we assume
that graph $G$ is connected). A widely studied such problem is the dominating set problem. The {\sc Minimum Dominating Set} problem seeks for a dominating set
with the minimum cardinality. 

Given a vertex $v\in V$, $N(v)$ is the set of neighbors 
or the open neighbourhood of $v$ in $G$; that is, 
$N(v)=\{u\in V: (u,v)\in E\}$. We denote by $\delta_G(v)=|N(v)|$ 
the \emph{degree} of vertex $v$ in $G$ (we may omit the argument $G$ when this will
cause no confusion). We let $\delta_G = \min_{v\in V}\{\delta(v)\}$ and 
$\Delta_G = \max_{v\in V}\{\delta(v)\}$. The \emph{private neighborhood} 
of vertex $v \in S\subseteq V$ is defined by $\{u \in V: N(u)\cap S = \{v\}\}$; 
a vertex in the private neighborhood of vertex $v$ is said to be its 
\emph{private neighbor} with respect to set $S$.

The \emph{domination number} of graph $G$, denoted as $\gamma(G)$, 
is the minimum cardinality 
of a dominating set for that graph; we shall refer to  a corresponding dominating set of cardinality $\gamma(G)$ as a $\gamma(G)$-set. A dominating set is {\em minimal} if by removing any of its elements the resultant reduced set becomes non-dominating. In fact, it is not so difficult to construct  a polynomial-time algorithm that generates a minimal dominating set. However,  not  necessarily a minimal dominating set approximates  well a minimum dominating set, i.e., given a minimal dominating set, there may exist a 
non-minimal dominating set with a (much) smaller number of vertices. 

The reader is referred to Haynes et al. \cite{haynes1} 
for further details on the basic graph terminology. 
The problem of domination in graphs was mathematically formalized by Berge  
and Ore in \cite{berge-1962} and \cite{ore}, respectively. Currently, this topic 
has been detailed in the two well-known books by 
Haynes, Hedetniemi, and Slater in \cite{haynes1} and \cite{haynes2}. 
The  theory of domination in graphs is an area of 
increasing interest in discrete mathematics and combinatorial computing, 
in particular, because of a number of important real-life applications. 
Such applications arise, for instance, during the analysis 
of social networks (see  Kelleher and Cozzens in \cite{kelleher}), in efficient 
identification of web communities (see Flake et al. in \cite{flake}), in 
bioinformatics (see Haynes et al. in \cite{haynes3}) 
and food webs (see Kim et al. in \cite{kim}), in the study 
of transmission of information in networks associated with defense systems
(see Powel in \cite{powel}), also in distributed computing (see Kuhn et al. in 
\cite{kuhn1}, Kuhn and Wattenhofer in \cite{kuhn2}). 
The variations of the domination problem and their applications have been 
widely studied, see for example, \cite{haynes1} and \cite{haynes2}. 

\smallskip

{\bf Related work.}
There are a few known results that bound the size of a minimum dominating set.
Given graph $G$, let $\Delta(G)$ be the maximum 
vertex degree in graph $G$, and let $d(G)$ be the \emph{diameter} of graph $G$, {\em i.e.}, the 
maximum number of edges on a shortest path between any pair of vertices in that graph.
Then  $\frac{n}{\Delta+1} \leq \gamma(G) 
\leq \frac{n}{2}$ or alternatively, $\frac{d(G)+1}{3} \leq \gamma(G) \leq n-\Delta$
(see for instance \cite{haynes1}). Despite these bounds, it is NP-hard
to find a minimum dominating set in a graph  and 
remains such for planar graphs  with maximum vertex degree 3 and for 
regular planar graphs with the maximum vertex degree 4, see  Garey and Johnson  
\cite{GJ}. It is also NP-hard to achieve a $c\log(\Delta)$-approximation, for 
a constant $c > 0$ Raz and Safra \cite{raz}. In terms of $n$, an optimal solution
cannot be approximated within a factor strictly smaller than $\ln(n)$ 
(unless $P=NP$), see more in \cite{chebik}. 

Exact (exponential-time) algorithms have been suggested for the problem. A branch 
and reduce algorithm with a measure and conquer approach for the dominating set problem
was proposed in Fomin et al. \cite{formin}. The measure and conquer approach helps 
to obtain good upper bounds on the running time for using branch and reduce method, 
often improving upon the currently best known bounds. It has been used successfully 
for the dominating set problem, see Fomin et al. \cite{formin}, when was obtained a running time of $O(2^{0.610n})$. Van Rooij and 
Bodlaender in \cite{rooij}, have obtained an $O(1.4969^n)$ time 
and polynomial space  algorithm using a measure and conquer approach, this improving the bound obtained in \cite{formin}.  
$O(1.4173^n)$, $O(1.4956^n)$ and $O(1.2303^{(1+\sqrt{1-2c})n})$ time algorithms 
for finding a minimum dominating set for some graph classes (chordal graphs, circle 
graphs and dense graphs, respectively) have been proposed in Gaspers et al. \cite{gaspers}. 

Alternatively to the exact exponential-time algorithms that cannot give a solution for
already  moderate sized practical instances, greedy algorithms have  been  proposed. 
Eubank et al. \cite{eubank} and Campan et al. \cite{campan} present heuristic 
algorithms with the experimental study of their practical behavior. The authors
do not specify the approximation factor and the time complexity of the proposed
algorithms. 

Chv\'atal in \cite{chvatal} describes an approximation algorithm with a guaranteed
approximation ratio $\ln\left(\frac{n}{OPT}\right) + 1 + \frac{1}{OPT}$ for the 
weighted set cover problem, where $OPT$ is the optimal objective value.  At each 
iteration, the algorithm selects vertex $s$ that minimizes the 
$\frac{w_s}{f(C\cup \{s\}) - f(C)}$, where $w_s$ is the weight of vertex $s$ and 
$f(C) = |\cup_{s\in C} N[s]|$, until $f(C) = |S|$. This algorithm can be
used to solve the minimum dominant set problem. In particular, 
Parekh \cite{parekh} has applied this greedy algorithm for the domination problem,  
and  showed that the cardinality of the dominating set created by that algorithm is 
upper bounded by $n+1-\sqrt{2m+1}$ (this algorithm turned out 
to be essentially the same as the first algorithm that we propose here). 

Exact polynomial-time algorithms and approximation schemes exist also for 
some restricted families of graphs. In particular, polynomial-time approximation 
schemes for unit disk graphs and growth-bounded graphs (the problem remains NP-hard
for such graphs) were described in \cite{huntIII} and \cite{nieberg}, respectively. Erlebach 
and van Leeuwen \cite{erlebach} have conducted a study of the problem for disk graphs 
and proposed an almost 3-approximation polynomial-time algorithm for disk graphs. 
Alternatively, an $2^{\log^*n}$-approximation polynomial-time algorithm for dominating 
set for disk graphs was proposed by Gibson and Pirwani \cite{gibson}.
There are known results for  series-parallel graphs, that is, graphs with two distinguished 
vertices called terminals, formed recursively by two simple composition operations 
(see \cite{schoenmakers}). 
Such graphs can be used to model series and parallel electric circuits. For 
series-parallel graphs, a linear-time algorithm was 
proposed by Kikuno et al. \cite{kikuno}. The so-called $k$-degenerate graphs were 
introduced by Lick and White  \cite{lick}.
Two linear time algorithms for finding a dominating set of fixed size in degenerated 
graphs are described in Alon and Gutner  \cite{alon}.
A {\em subdivision} of a graph $G$ is a graph that can be obtained from $G$ by a 
sequence of edge subdivisions. If a subdivision of graph $H$ is a subgraph of 
graph $G$, then $H$ is a {\em topological minor} of $G$. For black and white graphs that do not 
contain $K_m$ as a topological minor, the authors in \cite{alon} give an 
algorithm with running time $(O(m))^{mk}n$, whereas for $K_m$-minor-free
graphs, the running time of the algorithm is shown to be $(O(\log m))^{mk/2}n$, 
where $m$ and $k$ are fixed parameters. 
For trees,  Cockayne et al. \cite{cockayne} proposed a fast linear-time exact 
algorithm.  Siebertz \cite{siebertz} proposed a greedy algorithm approximating dominating sets on biclique-free graphs with the approximation ratio of 
$O(t\ln(\gamma(G)))$ for graphs without $K_{t,t}$ as a subgraph induced. 
This algorithms, however, do not guarantee a good approximation for general graphs.

The so-called {\em cograph} was introduced by Lerchs in \cite{lerchs}. A cograph is defined recursively as follows: $(i)$ $K_1$ is a cograph; $(ii)$ if $G$ is a cograph, then $\overline{G}$ is a cograph, and $(iii)$ if $G_1$ and $G_2$ are cographs, then $G_1 \cup G_2$ is a cograph, where $G_1=(V_1,E_1)$, $G_2=(V_2,E_2)$ and $G_1 \cup G_2 = (V_1 \cup V_2, E_1 \cup E_2)$. Golumbic and Rotics in \cite{golumbic} show that the cographs are exactly the graphs of clique-width at most 2, hence the {\sc Minimum Dominating Set} problem can be solved in polynomial time for this type of graphs. 

A dominating set $D$ of a graph $G$ is called a secure dominating set of $G$ if for every $u \in V\setminus D$ there exists $v \in D$ such that $uv\in E$ and $(D\setminus \{v\})\cup \{u\}$ is a dominating set of $G$. Recently, an $O(n+m)$-time algorithm that finds the secure domination number of a cograph was described by Pradhan et al. \cite{pradhan}. The reader is referred to Vazirani \cite{vazirani} for a more extensive survey of approximation algorithms for different types of graphs. 

Some generalizations of the minimum dominating set problem have also been studied.
For instance, in the so-called minimum 
weight dominating set problem (MWDS), each vertex is associated with a 
positive integer, and the goal is to find a dominating set with the smallest weight. 
A local search algorithm for MWDS was proposed by Wang Y. \emph{et al}. \cite{wang}.  
They designed a new fast construction procedure with four reduction rules to 
reduce the size of graphs, and developed a configuration checking strategy to improve 
the search. Guha and Khuller in \cite{guha}, proposed two heuristic
algorithms to approximate a connected dominating set (a dominating set for which  
the subgraph induced by vertices of that set is connected). Another  generalization 
of the domination problem is the $k$-domination problem that looks for the so-called
$k$-dominating set of vertices, a set of vertices such that every vertex from the
complement of that has at least $k$ neighbors from the former set. Foerster in 
\cite{k_dominating_set} proposed an $O(n(n+m))$ algorithm with approximation ratio 
$ln(\Delta+k)+1$ (we carry out a comparison with our algorithm in Section 3).

\smallskip 

{\bf Our contributions.} 
In this paper we propose a two-stage solution method 
for {\sc Minimum Dominating Set} problem yielding an overall polynomial-time 
approximation algorithm. A dominating set, which is not necessarily optimal
(and even minimal), is created at the first stage, which is reduced at the 
second purification stage. 
As already noted, the greedy algorithm of stage 1, though developed
independently from the earlier mentioned algorithm  from \cite{chvatal} and 
\cite{parekh}, turned out to be essentially the same as the latter one. 
At stage 2, the reduction of the dominating set of stage 1  is achieved 
based on the analysis of the flowchart of the algorithm of stage 1 
represented as a special kind of a spanning forest of the original graph 
$G$. This forest is formed by a special kind of clustering of  the dominating 
set formed at stage 1. Then the purification of this dominating set 
is accomplished at stage 2 using a special kind
of traversal of each rooted tree from the forest.

Thus the purification is based on the clustering of the flowchart of the algorithm of 
stage 1. The clustering  can be accomplished in different ways. Different clustering 
methods potentially yield a collection of different rooted trees. Likewise, 
different traversal and purification methods can be developed for a chosen 
cluster structure. The same traversal and purification method will 
potentially give different outcome for a given cluster structure. Thus
different combinations of a particular clustering and traversal/purification methods 
result in different overall purification algorithms of different efficiency. 
In this paper, we propose a simple clustering and traversal/purification methods
yielding an overall $O(n^2\log n)$ time procedure, whereas stage 1 is 
implemented in time $O(n^3)$ (a more substantial study of different clustering
and purification methods could well be the subject of a future research).

We establish some types of graphs for which the algorithm of stage 1 already delivers 
a minimum dominating set, and we derive sufficient conditions when the overall 
algorithm delivers a minimum dominating set. 

The behavior of our approximation algorithm of stage 1 is reflected by the 
three alternative approximation ratios that we give here. These
bounds similarly apply to the algorithm \cite{chvatal} and \cite{parekh}. 
The first two approximation ratios $ln(\frac{n}{\gamma(G)}) + 1$  and 
$ln(\Delta + 1) + 1$ are derivations from the earlier known ratios. The first
one from \cite{chvatal} is also good for our greedy algorithm, and the second one
is a special case of the above mentioned ratio $ln(\Delta+k)+1$ for the $k$-domination
problem from \cite{k_dominating_set} for $k=1$. We also show that the greedy algorithm 
of stage 1 has an alternative approximation ratio $\frac{\Delta+1}{2}$ (which 
again apply to the algorithm from \cite{chvatal} and \cite{parekh}). Note that the 
latter two bounds involve solely invariant problem instance parameters so that they 
can be easily calculated a priory before the algorithm is actually applied. 
We give one additional approximation ratio $\frac{\Delta+1}{2}$ 
for the overall two-stage algorithm, which is guaranteed to be held if 
$1\leq \Delta \leq 4$ or $n \geq \gamma(G)e^{\frac{\Delta-1}{2}}$.

The second purification stage results in a significant reduction of the dominant 
set created 
at the first stage, in practice. The practical behavior of both stages was verified 
for randomly generated problem instances. The computational 
experiments emphasize
the gap between a solution of Stage 1 and a solution of Stage 2.

In the next section we describe the greedy algorithm of stage 1.
We also give some classes of graphs for which stage 1 delivers an 
optimal solution. In Section 3 we study some useful properties of the dominating set
delivered by stage 1, and we discuss our clustering and purification 
approaches. Then we present our simple clustering and purification methods. 
In Section 5 we carry out a comparative study of the performance of our algorithm vs 
the state-of-the-art approximation algorithms for the domination and $k$-domination 
problems. In Section 6 we report our experimental results.

\section{Stage 1: The Greedy Algorithm}

In this section we give a description of the greedy algorithm of stage 1 (this 
essentially describes the greedy algorithm from \cite{chvatal} and \cite{parekh}; 
from here on, we may refer to either of the algorithms as {\em Greedy}).
Let $ G = (V, E) $ be a graph; in particular $ V $ and $ E $ will be also denoted as $ V (G) $ and $ E(G) $. For any $ U\subseteq V  $, let $ G[U] $ denote the subgraph of $ G $ induced by vertices of $ U $. For any $ U\subseteq V  $, let 
 $ \overline{U}=V \setminus U $.  For any $ U\subseteq V  $, let $ N(U) = \{v \in V\setminus U : v $ has a neighbor in $ U\} $; in particular, if $ U = {u} $, then let us simply write $ N(u) $.
Greedy works in a number of iterations limited from the above by $n$. At each iteration 
$h\geq 1$, one specially selected vertex, denoted by $v_h$, is added to the dominant
set $S_{h-1}$ of the previous iteration, i.e., we let $S_h:=S_{h-1}\cup\{$$v_h$$\}$, 
where initially we let $S_0=\emptyset$. 

At each iteration $ h $, the \textit{active degree} of a
vertex $ v \in \overline{S_{h-1}}  $ is given by $ | N(v) \setminus [S_{h-1} \cup N(S_{h-1})]| $. Then, at each iteration $ h $, vertex
$ v_h $ is any vertex of $ S_{h-1} $ with maximum active degree. 
Note that $v_1$ is a 
vertex with the maximum degree in graph $G$.  
The algorithm halts when $ S_h $ is a dominating set of $ G $. At that iteration, all 
uncovered vertices with active degree 0 (if any) are included in the set $S_h$.

\medskip

\begin{algorithm*}
\caption{Algorithm Stage-One (Greedy)}\label{alg_basic}
\begin{algorithmic}
\State Input: A graph $G$.
\State Output: A dominating set $ S $ of $ G $.
\State $h := 1$; 
 
\State $S_0:= \emptyset$;  

\{ iterative step \} 

\While{$S_h$ is not a dominating set of  graph $G$} 
\State $h := h+1$; 
\State $v_h$ := any vertex of $ \overline{S_{h-1}} $ with maximum active degree;  
\State $S_h := S_{h-1}\cup \{$$v_h$$\}$;
\EndWhile
	
\end{algorithmic}

\end{algorithm*}

\medskip

{\bf Implementation.} We represent graph $G=(V,E)$ by the
adjacency matrix $M=M(G)$, in which the entry $(i,j)$ is 1 if vertices
$v_i$ and $v_j$ are neighbors in graph $G$. The adjacency matrix  
is updated at each iteration $h$ according to the selection of vertex $v_h$. 
Initially, at the beginning of iteration 0,  $M_0 := M$. Iteratively, some 1 
entries from matrix $M_{h-1}$ of iteration $h-1$ are replaced by 0 entries in 
matrix $M_h$ of iteration $h\ge 1$. In particular, the updated matrix $M_h$ 
is obtained by replacing an entry 1 in the column corresponding to each neighbor 
of vertex $v_h$ by 0, and replacing all 1 entries in the row and the column 
corresponding to vertex $v_h$ again by 0. In this way, all edges incident to an 
already covered neighbor of vertex $v_h$ and all edges associated with vertex 
$v_h$ will be ignored from iteration $h$.\\

\begin{lemma}\label{degree}
At every iteration $h$, vertex $v_h$ with the maximum active degree is determined in time $O(n^2)$.
\end{lemma}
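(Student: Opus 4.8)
The plan is to bound the cost of one iteration by accounting separately for (a) recomputing the active degrees of all candidate vertices and (b) scanning these degrees to pick the maximum. Since the degree values live in the updated adjacency matrix $M_{h-1}$, the first task amounts to a pass over the rows of the matrix, and the second is a trivial linear scan; both should cost $O(n^2)$ and $O(n)$ respectively, so the iteration cost is dominated by the matrix work.

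Concretely, I would first observe that, by the Implementation paragraph, at the start of iteration $h$ we have the matrix $M_{h-1}$ in which every surviving $1$ entry in row $i$ corresponds precisely to an edge from $v_i$ to a vertex that is neither in $S_{h-1}$ nor already dominated by $S_{h-1}$ — in other words, the active degree of a vertex $v_i\in\overline{S_{h-1}}$ equals the number of $1$'s in row $i$ of $M_{h-1}$. Hence computing the active degree of a single vertex costs $O(n)$ (one row sweep), and computing the active degrees of all vertices in $\overline{S_{h-1}}$ costs $O(n^2)$. Then, to select $v_h$, we keep a running maximum over these at most $n$ values, which is an additional $O(n)$ comparisons. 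Thus the total work at iteration $h$ is $O(n^2)+O(n)=O(n^2)$, which is the claim.

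The only genuine subtlety — and what I expect to be the main point to justify carefully rather than a real obstacle — is the correctness of the identification ``active degree $=$ row sum of $M_{h-1}$''. This requires invoking the update rule: after selecting $v_{h-1}$, all columns of its neighbors are zeroed (so those vertices, now dominated, no longer contribute to anyone's active degree), and the row and column of $v_{h-1}$ itself are zeroed (so $v_{h-1}$ leaves $\overline{S}$ and contributes nothing). An easy induction on $h$ then shows that $M_{h-1}$ has a $1$ in position $(i,j)$ iff $v_i,v_j$ are adjacent in $G$ and both lie in $\overline{S_{h-1}\cup N(S_{h-1})}$, which is exactly the set counted by the active degree of $v_i$. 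Granting this invariant, the time bound is immediate from the two counting steps above, so I would keep the proof short: state the invariant, note it follows from the update rule, and conclude that one row-sum per vertex plus one linear scan gives $O(n^2)$.
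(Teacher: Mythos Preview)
Your proposal is correct and follows essentially the same approach as the paper: compute each active degree as the row sum of $M_{h-1}$ (cost $O(n)$ per vertex, $O(n^2)$ overall) and take the maximum. The paper's proof is terser and simply asserts that the active degree equals the row sum without spelling out the matrix invariant; your inductive justification of that identification is a reasonable addition but not strictly needed for the argument the paper gives.
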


\begin{proof}
First note that, in every iteration $h$, for every vertex in $\overline{S}_h$ 
the active degree is calculated by summing up all the 1 entries in the row 
corresponding to that vertex. Since each row contains $n$ entries and there are
less than $n$ rows to verify, the calculation of the active degrees of all the
elements in  $\overline{S}_h$ and the selection of the maximum takes time $O(n^2)$.
\end{proof}

\begin{theorem}\label{time-1}
Algorithm Stage-One runs in time $ O(n^3) $. 
\end{theorem}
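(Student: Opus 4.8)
The plan is to bound the total running time by the product of the number of iterations of the \textbf{while} loop and the cost incurred in a single iteration. First I would show that the loop executes at most $n$ times. At every iteration exactly one new vertex $v_h$ is appended to the current set, so $|S_h|=h$; since $S_h\subseteq V$ we get $h\le n$. One should also check the degenerate case: if a stage is reached at which every vertex of $\overline{S_{h-1}}$ has active degree $0$ while $S_{h-1}$ is still not dominating, the closing rule of the algorithm places all remaining uncovered vertices into $S_h$ in a single step, so the loop terminates within $n$ iterations in every case.

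Next I would account for the work done in a fixed iteration $h$. Computing the active degrees of all vertices of $\overline{S_{h-1}}$ and selecting one of maximum active degree costs $O(n^2)$ by Lemma~\ref{degree}. The update $M_{h-1}\to M_h$, as described above, touches only the columns corresponding to the (at most $n-1$) neighbours of $v_h$ together with the row and the column of $v_h$ itself; since each such row or column has $n$ entries, the update is performed in $O(n^2)$ time. Finally, testing whether $S_h$ is a dominating set — equivalently, whether some vertex of $\overline{S_h}$ still lies outside $N(S_h)$ — can be done in $O(n^2)$ time by scanning the updated matrix, or in $O(n)$ time by maintaining a running count of dominated vertices. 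Hence a single iteration runs in $O(n^2)$ time.

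Combining the two estimates gives a total running time of $O(n\cdot n^2)=O(n^3)$, as claimed. I do not expect a real obstacle here; the argument is a routine aggregation of per-iteration costs, with the $O(n^2)$ term of Lemma~\ref{degree} dominating each iteration. The only points requiring a little care are justifying the $O(n)$ bound on the number of iterations — in particular noticing that the case in which all active degrees vanish prematurely is disposed of by the algorithm's final step — and reading the matrix-update procedure so that its per-iteration cost is confirmed to be $O(n^2)$ (which in any case is already subsumed by the cost of locating the maximum active degree).
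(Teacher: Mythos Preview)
Your proposal is correct and follows essentially the same approach as the paper: bound the number of iterations by $n$, invoke Lemma~\ref{degree} for the $O(n^2)$ cost of selecting $v_h$, observe that the matrix update is also $O(n^2)$, and multiply. You are in fact slightly more careful than the paper in that you explicitly account for the domination test and the terminal active-degree-zero case, but these are just added detail on the same skeleton.
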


\begin{proof}
First, recall that the total number of the iterations 
(from the external loop)
in Greedy is bounded from the above by $n$. For each iteration $h$, the
selection of vertex $v_h$ takes time $O(n^2)$ (Lemma \ref{degree}). The cost of
the update of the adjacency matrix at iteration $h$  is again bounded by $O(n^2)$. 
Hence, Greedy finds a dominating set $S$ in time $O(n^3)$.
\end{proof}

It is straightforward to verify that the above estimation also holds for the 
algorithm from \cite{parekh}. 
The following remark gives a sufficient optimality condition for the algorithm Greedy Algorithm (and also that from \cite{parekh}), 
and will also be useful in the estimation of the approximation ratio of the
proposed overall algorithm.\\

\begin{remark}\label{obs-1}
If $|S| \leq 2$ then $S$ is a minimum dominating set.
\end{remark}

\begin{proof}
It clearly suffices to consider only the two cases, $\gamma(G)=1$ and $\gamma(G)=2$. 
Indeed, suppose, first, that $\gamma(G)=1$. Then there exists vertex $v\in V$ adjacent
to every other vertex in $V$. Then at the initial iteration 0 of Greedy Algorithm 
the (active) degree of vertex $v$  would be the maximum and this vertex will be 
selected as $v_0$, and, $S:=\{ v\}$ (among all such vertices, ties can 
clearly be broken arbitrarily). Otherwise, $\gamma(G)\ge 2$, 
whereas the dominating set returned by stage 1 contains no more than 2
elements, hence it must be optimal.
\end{proof}

Figure \ref{fig_1992} below shows an example where the greedy 
algorithm of Stage 1 returns a non-minimal dominating set with cardinality 
greater than $n/2$.

 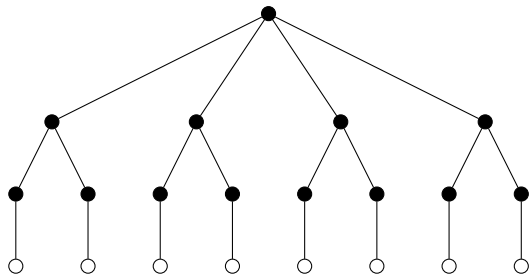
\begin{figure}[h]
	\centering
	\begin{tikzpicture}[scale=.48, transform shape]
				
		\node [draw, shape=circle] (r1) at  (6,-3) {};
		\node [draw, shape=circle] (r2) at  (8,-3) {};
		\node [draw, shape=circle] (r3) at  (10,-3) {};
		\node [draw, shape=circle] (r4) at  (12,-3) {};
		\node [draw, shape=circle] (r5) at  (14,-3) {};
		\node [draw, shape=circle] (r6) at  (16,-3) {};
		\node [draw, shape=circle] (r7) at  (18,-3) {};
		\node [draw, shape=circle] (r8) at  (20,-3) {};
		
		\node [draw, shape=circle, fill=black] (p1) at  (6,-1) {};
		\node [draw, shape=circle, fill=black] (p2) at  (8,-1) {};
		\node [draw, shape=circle, fill=black] (p3) at  (10,-1) {};
		\node [draw, shape=circle, fill=black] (p4) at  (12,-1) {};
		\node [draw, shape=circle, fill=black] (p5) at  (14,-1) {};
		\node [draw, shape=circle, fill=black] (p6) at  (16,-1) {};
		\node [draw, shape=circle, fill=black] (p7) at  (18,-1) {};
		\node [draw, shape=circle, fill=black] (p8) at  (20,-1) {};
		
		\node [draw, shape=circle, fill=black] (q1) at  (7,1) {};
		\node [draw, shape=circle, fill=black] (q2) at  (11,1) {};
		\node [draw, shape=circle, fill=black] (q3) at  (15,1) {};
		\node [draw, shape=circle, fill=black] (q4) at  (19,1) {};
		
		\node [draw, shape=circle, fill=black] (x1) at  (13,4) {};
		
		\draw(r1)--(p1)--(q1)--(x1);
		\draw(r2)--(p2)--(q1);
		\draw(r3)--(p3)--(q2)--(x1);
		\draw(r4)--(p4)--(q2);
		\draw(r5)--(p5)--(q3)--(x1);
		\draw(r6)--(p6)--(q3);
		\draw(r7)--(p7)--(q4)--(x1);
		\draw(r8)--(p8)--(q4);
		
	\end{tikzpicture}
	\caption{\small The solution delivered by Greedy Algorithm is represented 
	by the dark vertices. }\label{fig_1992}
\end{figure}

\subsection{Favorable and unfavorable graph subclasses}

In this subsection we give the examples of graph families for which Greedy Algorithm 
gives an optimal solution (i.e., its output is a $\gamma(G)$-set). Also, we give an example of a graph for which the algorithm fails to generate an optimal solution. To shows the first aim, we define the following operations on the edges of graph $G$.

{\bf\emph{Double Subdivision with inflation of size $k$:}} Given an edge $uv$, remove this edge, and add the following $2k$ vertices $w_{11}, \ldots, w_{1k}$ and $w_{21}, \ldots, w_{2k}$. Finally, add the edges $uw_{1i}$, $w_{1i}w_{2i}$ and $w_{2i}v$ for every $i \in \{1,\ldots,k\}$.

{\bf\emph{Addition of $t$ pendant vertices:}} Given a vertex $x$ add $t$ new vertices $y_1, y_2, \ldots, y_t$ and the edges $xy_i$, for every $i\in \{1,\ldots,t\}$.

For any graph $H$ and for any positive integers $a,b$, let $W_{a,b}(H)$ denote the graph obtained from $H$ as follows:

\begin{enumerate}

\item Apply the operation {\bf Double Subdivision with inflation of size $k_i$} for each edge of $H$ with $1 \leq k_i \leq a$ and $1\leq i \leq m$, notice that for each edge we can get different values for $k_i$.

\item Apply the operation {\bf Addition of $t_i$ pendant vertices} for each vertex of $H$ with $0 \leq t_i \leq b$ and $1\leq i \leq n$, $t_i$ can take different values as above.
 
\end{enumerate}

Let now $\mathcal{F}_1$ be the family of graphs $W_{a,b}(H)$ for any graph $H$ and any positive integers $a,b$. It is easily observed that for any graph belonging to $\mathcal{F}_1$ Greedy Algorithm gives a minimum dominating set. A fairly representative graph of the family $\mathcal{F}_1$ is depicted in Figure \ref{fig2}, where $H=C_3$.

\begin{figure}[h]
\centering
\begin{tikzpicture}[scale=.55, transform shape]

\node [draw, shape=circle, fill=black] (u) at  (0,0) {};
\node [draw, shape=circle, fill=black] (v) at  (5,5) {};
\node [draw, shape=circle, fill=black] (w) at  (10,0) {};

\node [draw, shape=circle] (s1) at  (3,1) {};
\node [draw, shape=circle] (s2) at  (3,-1) {};
\node [draw, shape=circle] (s3) at  (7,1) {};
\node [draw, shape=circle] (s4) at  (7,-1) {};

\node [draw, shape=circle] (s5) at  (10,2) {};
\node [draw, shape=circle] (s6) at  (8,2) {};
\node [draw, shape=circle] (s7) at  (9,4) {};
\node [draw, shape=circle] (s8) at  (7,3.5) {};

\node [draw, shape=circle] (s9) at  (2,2) {};
\node [draw, shape=circle] (s10) at  (0,2) {};
\node [draw, shape=circle] (s11) at  (3,3.5) {};
\node [draw, shape=circle] (s12) at  (1,4) {};

\node [draw, shape=circle] (l1) at  (3,7) {};
\node [draw, shape=circle] (l2) at  (7,7) {};

\node [draw, shape=circle] (l3) at  (-2,0) {};
\node [draw, shape=circle] (l4) at  (0,-2) {};

\draw(u)--(s1)--(s3)--(w);
\draw(u)--(s2)--(s4)--(w);

\draw(w)--(s5)--(s7)--(v);
\draw(w)--(s6)--(s8)--(v);

\draw(u)--(s9)--(s11)--(v);
\draw(u)--(s10)--(s12)--(v);

\draw(v)--(l1);
\draw(v)--(l2);

\draw(u)--(l3);
\draw(u)--(l4);

\end{tikzpicture}
\caption{\small A graph in $W_{2,2}(C_3) \subseteq \mathcal{F}_1$, where the three bold vertices 
 form the output of the algorithm which is optimal}\label{fig2}
\end{figure}
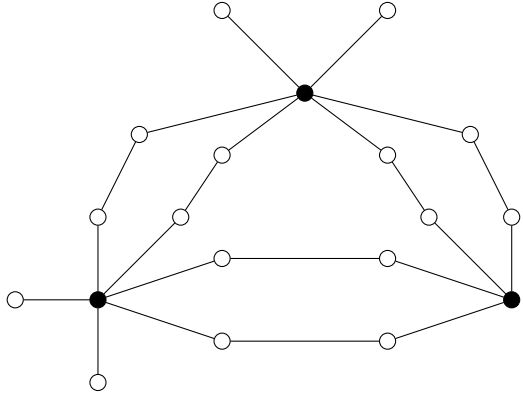

\section{Flowchart of Greedy Algorithm - clusters and induced forests}

It is natural to think of a purification procedure that eliminates some possibly 
redundant vertices from set $S$ (which is not necessarily a minimum 
and even a minimal dominating set). At Stage 2 we reduce the dominant set 
$S$ of Stage 1 to a subset $S^* \subseteq S$  which remains
dominating, is minimal and contains at most $n/2$ vertices. The remaining part 
of the paper is  dedicated basically to Stage 2. 

We will refer to a vertex $x\in S$ as {\em purified} if it is eliminated 
from set $S$ at stage 2. The purified vertices are determined in a specially 
formed subgraphs of graph $G$, the so-called {\em clusters}. Each cluster is
a subgraphs of graph $G$ induced by a subset of vertices of set $S$. 

We treat a cluster $C$ as a rooted tree $T(C)$, and we 
form a set of clusters while Greedy runs for a particular
problem instance. The union of all clusters define a forest $T$, a set of 
rooted trees $\{T_1,\dots, T_k\}$, where $k$ is the total number of the
clusters.  

The purification procedure, that we describe a bit later, invoked for 
a cluster $C_i$, works on a number of iterations. We will denote by $T^h_i$ 
the purified tree corresponding to iteration $h$, and by $T^\prime_i$ the 
purified tree constructed for cluster $C_i$. $T^\prime$ will denote 
the overall purified forest, a result of
$k$ independent calls to the purification procedure. 

Now we introduce some necessary definitions that will permit us to define 
formally our  clusters. 
 
Let $ S $ be the output of Algorithm Stage-One and let
 $ S = \{v_1, \dots , v_p\} $, for some integer $ p $, where $ v_h $
is the vertex included at iteration $h$.
Let $ S(h) $ be the set of vertices uncovered by $ v_1, \dots , v_{h-1} $ and covered by $ v_h $ 
(note that sets $ S(1), \dots , S(p) $ are mutually disjoint).

Let us say that a \textit{tied pair} is an element of the set 
$ P_S = \{($$v_h$$,v): $$v_h$$ \in S, v \in S \bigcap S(h), \; h \in \{1, \dots , p\}\}$; 
note that each tied-pair is an edge of $ G $. The following proposition 
immediately follows from this definition.\\

\begin{prop}\label{tight}
Let $ v_b, v_c \in S $, with $ b < c $, be such that $ (v_b, v_c) \in E$. Then one has
$ (v_b, v_c) \in P_S  $ unless $ (v_a, v_c) \in P_S  $ for some $ v_a \in S  $ with $ a < b $.	
\end{prop}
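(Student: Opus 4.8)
The plan is to unpack the definition of the tied-pair set $P_S$ and to trace back to the iteration at which vertex $v_c$ first became covered. Suppose $v_b, v_c \in S$ with $b < c$ and $(v_b, v_c) \in E$. Since $v_c \in S$ and $v_c$ is covered at the latest by the time it is itself selected (at iteration $c$), there is a well-defined first iteration $h$ at which $v_c$ becomes covered by the vertex $v_h$ chosen at that iteration; that is, $v_c \in S(h)$. Because $v_c$ is not covered by $v_1, \dots, v_{h-1}$ and $v_c$ is selected at iteration $c$, and a vertex is always covered by the time it is added, we must have $h \le c$; and since $v_c$ is uncovered before iteration $h$ it cannot be that $v_c = v_j$ for any $j < h$, so in fact $h \le c$ with $v_h \ne v_c$ unless $h = c$. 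In either case $v_h \in S$, $v_c \in S \cap S(h)$, so by definition $(v_h, v_c) \in P_S$, and moreover $(v_h, v_c)$ is an edge of $G$.

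Next I would split on the position of the index $h$ relative to $b$. If $h < b$, then setting $a := h$ we have $a < b$ and $(v_a, v_c) \in P_S$, which is exactly the ``unless'' clause of the statement, so we are done in this case. Otherwise $h \ge b$. I claim that in this case $h = b$: indeed, $v_b$ is a neighbor of $v_c$ and $v_b$ is selected at iteration $b$, so $v_c$ is covered no later than iteration $b$, giving $h \le b$; combined with $h \ge b$ this forces $h = b$. Then $(v_h, v_c) = (v_b, v_c) \in P_S$, which is the main conclusion of the proposition. Thus in every case either $(v_b, v_c) \in P_S$ or $(v_a, v_c) \in P_S$ for some $a < b$, as required.

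The only subtlety — and the single point that needs to be stated carefully rather than a genuine obstacle — is the observation that every vertex of $S$ is indeed ``covered'' by the algorithm at some iteration: a vertex $v_h$ added at iteration $h$ belongs to $N[v_h]$, so it is covered by itself at iteration $h$ at the latest, which guarantees that the first covering iteration $h$ for $v_c$ exists and satisfies $h \le c$. With this in hand the argument is just a case analysis on whether $v_c$ was first covered before iteration $b$ (giving the exceptional case with witness $v_a$, $a < b$) or exactly at iteration $b$ by the neighbor $v_b$ itself (giving $(v_b, v_c) \in P_S$); the case $h > b$ is vacuous because a neighbor of $v_c$ already sits in $S$ by step $b$. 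No other case can arise, so the proposition follows.
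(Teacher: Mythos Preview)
Your argument is correct and is exactly the natural unpacking of the definition that the paper has in mind; the paper itself gives no proof beyond remarking that the proposition ``immediately follows from this definition,'' and your case split on the first covering iteration $h$ of $v_c$ (necessarily $h\le b$ since $v_b$ is a neighbor of $v_c$) is precisely that immediate observation written out in full. The digression about the possibility $h=c$ is harmless but unnecessary here, since the edge $(v_b,v_c)$ with $b<c$ already forces $h\le b<c$.
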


Let $ G(P_S) $ denote the subgraph of $ G $ induced by the pairs [i.e. by the edges of $ G $] in
$ P_S $. That is, vertices of $ G(P_S) $ are those involved in edges in $ P_S $, and edges of $ G(P_S) $
are those in $ P_S $.

Note that $ G(P_S) $ is a forest of rooted trees say $ T_1, . . . , T_k $ for some integer $ k $ : in fact, no
cycle say $ (v_a, v_b),(v_b, v_c), . . . ,(v_h, v_a) $ may arise in $ G(P_S) $, since by the above definitions
one would have $ a < b < c < . . . < h < a $ (a contradiction). In particular the root of $ T_i $
(for $ i = 1, . . . , k $) is the vertex of $ T_i $ which was added to the greedy solution [i.e. to $ S $]
at the earliest iteration [of Algorithm Stage-One] over vertices of $ T_i $.\\

{\bf Example.} From a first glance, one may suggest that Greedy  
delivers an optimal dominating set if  there arises no tied pair during its 
execution, i.e., $P_S=\emptyset$. Although for the vast majority of the 
graphs this is true,  there are graphs for which this is not true. 
As an example, consider the graph
of Figure \ref{fig3}. It can be easily seen that no tied pair arises and 
Greedy outputs a 3-vertex set  $\{1,2,3\}$. But $\gamma(G)=2$. In general, 
if $S$ contains no tied pair, then $S$ is an independent set. 
However, the non-existence of a tight pair combined with another similar
condition guarantees the optimality of Greedy,  as shown below.\\ 

\begin{figure}[H]
\centering
\begin{tikzpicture}[scale=.65, transform shape]

\node [draw, shape=circle] (s1) at  (0,0) {4};
\node [draw, shape=circle] (s2) at  (3,0) {5};
\node [draw, shape=circle] (s3) at  (6,0) {6};

\draw(s1)--(s2)--(s3);

\node [draw, shape=circle] (l1) at  (0,2) {1};
\node [draw, shape=circle] (l2) at  (3,7) {3};
\node [draw, shape=circle] (l3) at  (6,2) {2};

\draw(s1)--(l1)--(s2);
\draw(s3)--(l3)--(s2);

\node [draw, shape=circle] (t1) at  (-3,3) {};
\node [draw, shape=circle] (t2) at  (0,4) {};
\node [draw, shape=circle] (t3) at  (3,5) {};
\node [draw, shape=circle] (t4) at  (6,4) {};
\node [draw, shape=circle] (t5) at  (9,3) {};
\node [draw, shape=circle] (t6) at  (3,9) {};

\draw(l1)--(t1)--(l2)--(t6);

\draw(l1)--(t2)--(l2);

\draw(l1)--(t3)--(l2);
\draw(l3)--(t3);

\draw(l1)--(t4)--(l2);
\draw(l3)--(t4); 

\draw(t5)--(l2);
\draw(l3)--(t5);

\end{tikzpicture}
\caption{\small Graph $G$, where $S=\{1,2,3\}$ and the unique $\gamma$-set 
for $G$ is $\{3,5\}$.}\label{fig3}
\end{figure}
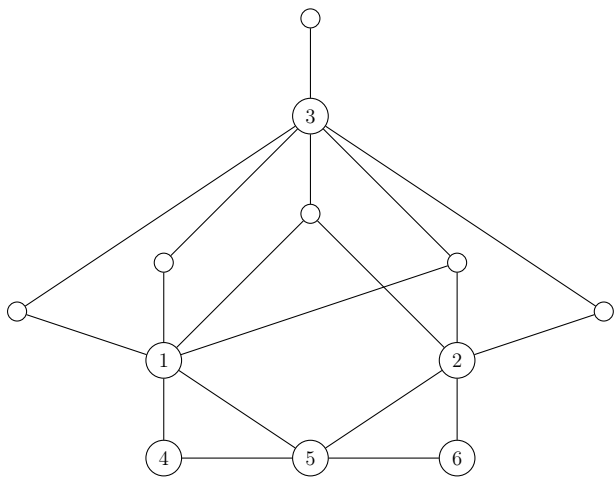

The following two observations easily follow from the definitions.\\ 

\begin{obs}\label{prop3}
	The following statements hold:
	\begin{itemize}
		\item [(i)] $ S $ is an \textit{independent set} of $ G $ if and only if $ P_S=\emptyset $;
		\item [(ii)] no vertex of $ \overline{S} $ is adjacent to (at least) two vertices of $ S $ if and only if there are
					two vertices $ v_a, v_b \in S  $ and a vertex $ w \in \overline{S} $ such that $ w \in S(a) $ and $ (w, v_b) \notin E $.
	\end{itemize}	
\end{obs}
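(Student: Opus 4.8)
The statement is a pair of biconditionals, and in both parts my plan is to prove the two implications separately, working directly from the definitions of $S(h)$, of $P_S$, and of an independent set; the only earlier result I need is Proposition~\ref{tight}, which handles the non-obvious direction of part~(i).

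\emph{Part (i).} The implication ``$S$ is independent $\Rightarrow P_S=\emptyset$'' is immediate from the definition: every member of $P_S$ is a pair $(v_h,v)$ with $v_h\in S$ and $v\in S\cap S(h)\subseteq S$, hence an edge of $G$ with both endpoints in $S$, so if $S$ is independent there are no such pairs and $P_S=\emptyset$. For the converse I would argue by contraposition. Assume $S$ is not independent and fix an edge $(v_b,v_c)\in E$ with $v_b,v_c\in S$ and $b<c$. By Proposition~\ref{tight}, either $(v_b,v_c)\in P_S$ or else $(v_a,v_c)\in P_S$ for some $v_a\in S$ with $a<b$; in either case $P_S\neq\emptyset$. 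Combining the two implications gives (i).

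\emph{Part (ii).} The plan is the same in spirit: rewrite both sides as elementary adjacency statements and check that they are complementary. The key observation concerns how $S(a)$ restricts to $\overline S$: since $S$ is a dominating set, every $w\in\overline S$ has a neighbor in $S$, and because ``$w$ is covered by $v_h$'' means $w\in N(v_h)$, the membership $w\in S(a)$ holds for exactly one index $a$, namely the least index with $v_a\in N(w)$. Thus $w\in S(a)$ singles out the neighbor of $w$ in $S$ of smallest index. Granting this, a vertex $w\in\overline S$ with $w\in S(a)$ is adjacent to at least two vertices of $S$ if and only if there is some $v_b\in S$ with $b\neq a$ and $(w,v_b)\in E$; quantifying this over all $w\in\overline S$ and passing to negations then yields the equivalence in (ii).

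Neither part involves any estimate or construction, so there is no real obstacle beyond keeping the logic straight. The two points that deserve attention are: in part~(i), that an edge with both endpoints in $S$ need not itself be a tied pair — which is exactly the case Proposition~\ref{tight} was designed to cover; and in part~(ii), the bookkeeping that $w\in S(a)$ identifies a \emph{specific} neighbor $v_a$ of $w$ in $S$ (the one of least index) and that the two $S$-vertices witnessing a ``bad'' vertex $w$ are necessarily distinct.
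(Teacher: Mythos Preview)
The paper offers no proof here; it simply asserts that both parts ``easily follow from the definitions.'' Your argument for part~(i) is correct and is exactly the intended check, with Proposition~\ref{tight} supplying the non-trivial direction.

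For part~(ii) there is a genuine gap, and it lies in your final sentence. Your chain of reasoning is sound up to ``quantifying over all $w\in\overline S$ and passing to negations,'' but what that step actually produces is the equivalence
\[
\bigl(\text{no }w\in\overline S\text{ has }\ge 2\text{ neighbours in }S\bigr)
\ \Longleftrightarrow\
\bigl(\forall\,w\in\overline S,\ \forall\,a\text{ with }w\in S(a),\ \forall\,v_b\neq v_a:\ (w,v_b)\notin E\bigr),
\]
a \emph{universal} condition on the right, not the \emph{existential} one printed in~(ii). These are not the same statement, and in fact (ii) as literally written is false: the right-hand side of (ii) holds as soon as \emph{some} $w\in\overline S$ is non-adjacent to \emph{some} vertex of $S$, which says nothing about other vertices of $\overline S$. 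The paper's own Figure~\ref{fig3} already furnishes a counterexample: there $S=\{1,2,3\}$, vertex $4\in\overline S$ satisfies $4\in S(1)$ and $(4,2)\notin E$ (so the right-hand side of (ii) holds), yet vertex $5\in\overline S$ is adjacent to both $1$ and $2$ (so the left-hand side fails).

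So your plan proves the correct statement --- presumably the one the authors intended, with either the quantifier on the right changed to ``for all'' or, equivalently, with $(w,v_b)\notin E$ replaced by $(w,v_b)\in E$ and the left-hand side negated. What you should not do is claim that your negation step reproduces (ii) verbatim; flag the discrepancy instead.
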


\begin{obs}\label{prop4}
	$ S \setminus V (G(P_S)) $ is an independent set of $ G $ and every vertex of $ S \setminus V (G(P_S)) $ has some neighbor in $ V (G) \setminus S $.	
\end{obs}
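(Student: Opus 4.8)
The plan is to prove the two assertions separately, both by exploiting the definitions of tied pair and of the vertex $v_h$ selected by Greedy at iteration $h$.

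First I would show that $W := S \setminus V(G(P_S))$ is an independent set. Suppose for contradiction that there exist $v_b, v_c \in W$ with $b < c$ and $(v_b, v_c) \in E$. Then Proposition~\ref{tight} applies: either $(v_b, v_c) \in P_S$, or $(v_a, v_c) \in P_S$ for some $v_a \in S$ with $a < b$. In the first case $v_b$ and $v_c$ are both incident to an edge of $P_S$, so both lie in $V(G(P_S))$, contradicting $v_b, v_c \in W$. In the second case $v_c$ is incident to the edge $(v_a, v_c) \in P_S$, so $v_c \in V(G(P_S))$, again a contradiction. Hence no such edge exists and $W$ is independent.

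Next I would show that every $v_h \in W$ has a neighbor in $V(G) \setminus S$. Fix $v_h \in W$. Since $v_h$ was added to $S$ at iteration $h$, at that iteration it had the maximum active degree; I would argue this active degree is at least $1$. Indeed, if $v_h$ had active degree $0$ when selected, Greedy only appends such a vertex at the very last step, and in that boundary situation $v_h$ must be covering itself only — but then $v_h$ would have to be isolated in the current residual graph, which contradicts connectivity of $G$ unless $n=1$; the degenerate cases are covered by Remark~\ref{obs-1}. So $v_h$ has active degree at least $1$, meaning there is a vertex $w \in N(v_h)$ with $w \notin S_{h-1} \cup N(S_{h-1})$; in particular $w$ is uncovered before iteration $h$ and covered by $v_h$, so $w \in S(h)$. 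Now either $w \notin S$, in which case $w \in V(G) \setminus S$ is the desired neighbor and we are done, or $w \in S$, in which case $(v_h, w)$ is a tied pair by definition of $P_S$, forcing $v_h \in V(G(P_S))$ and contradicting $v_h \in W$. Therefore $w \in V(G) \setminus S$, which completes the argument.

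The main obstacle I anticipate is the boundary reasoning in the second part: one has to be careful about the last iteration of Greedy, where uncovered vertices of active degree $0$ are thrown into $S$. Such a vertex would have no neighbor outside $S$ at all, so the claim would fail for it — but I expect that such a vertex is necessarily incident to a tied pair (it was covered by some earlier $v_j$, hence lies in some $S(j)$ with $j<h$, making it part of $G(P_S)$), so it is excluded from $W$ anyway. Pinning down this case, together with the trivially small graphs dispatched by Remark~\ref{obs-1}, is the only delicate point; everything else is a direct unwinding of the definitions.
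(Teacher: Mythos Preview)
Your argument for independence via Proposition~\ref{tight} is correct, and so is the main case of the second part: if $v_h\in W$ has positive active degree when selected, any witness $w\in N(v_h)\cap S(h)$ must lie outside $S$, since otherwise $(v_h,w)\in P_S$ would force $v_h\in V(G(P_S))$.

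The only flaw is your handling of the boundary case. Your proposed fix --- that an uncovered vertex $v$ of active degree $0$ thrown into $S$ ``was covered by some earlier $v_j$, hence lies in some $S(j)$ with $j<h$'' --- directly contradicts the hypothesis that $v$ is uncovered at that step; this route to $V(G(P_S))$ does not work. In fact the premise you are trying to rescue is also false: such a $v$ \emph{does} have a neighbor outside $S$. Every neighbor $w$ of $v$ is covered (since $v$ has active degree $0$), but $w\notin S_{h-1}$ (otherwise $v$ would already be covered), and $w$ is not added at the final step either (only uncovered vertices are thrown in there); hence $w\in V(G)\setminus S$. Alternatively, one can simply observe that the case never arises in a connected graph with $n\ge 2$: whenever some vertex $v$ is still uncovered, either $v$ itself or any neighbor $w$ of $v$ (necessarily in $\overline{S_{h-1}}$, and seeing the uncovered $v$) has positive active degree, so the maximum active degree over $\overline{S_{h-1}}$ is at least $1$ and Greedy never selects an active-degree-$0$ vertex.
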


Let us recall that an \textit{independent set} of a graph is a set of vertices of the graph which
are mutually non-adjacent.\\

\begin{prop}\label{prop2}
	If $ S $ is an independent set of $ G $ and if no vertex of $ \overline{S} $ is adjacent to (at least) two vertices of $ S $, then $ S $ is a minimum dominating set of $ G $.	
\end{prop}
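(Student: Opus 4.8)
The plan is to show that under the two hypotheses, the set $S$ produced by Greedy is not merely dominating but \emph{optimal}, by exhibiting a lower bound on $\gamma(G)$ that matches $|S|$. The key structural consequence of the hypotheses is that the closed neighborhoods $\{N[v] : v \in S\}$ partition $V(G)$: since $S$ is independent (first hypothesis), no two vertices of $S$ share an edge, and since no vertex of $\overline{S}$ is adjacent to two vertices of $S$ (second hypothesis), no vertex outside $S$ lies in two such closed neighborhoods. Because $S$ is dominating, these closed neighborhoods also cover $V(G)$. Hence $\{N[v]:v\in S\}$ is a partition of $V(G)$ into $|S|$ nonempty cells.

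Given this partition, I would argue as follows. Let $D$ be any dominating set of $G$. For each $v\in S$, the cell $N[v]$ must contain at least one vertex of $D$: indeed $v$ itself is either in $D$, or $v$ has a neighbor in $D$, and in either case that witness lies in $N[v]$. Since the cells $N[v]$, $v\in S$, are pairwise disjoint, the vertices of $D$ witnessing different cells are distinct, so $|D|\ge |S|$. Taking $D$ to be a $\gamma(G)$-set gives $\gamma(G)\ge |S|$; combined with the fact that $S$ is itself dominating (so $\gamma(G)\le |S|$), we conclude $|S|=\gamma(G)$, i.e.\ $S$ is a minimum dominating set.

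The only slightly delicate point — and the place I would be most careful — is establishing the disjointness of the cells rigorously from the precise phrasing of the second hypothesis. The statement ``no vertex of $\overline{S}$ is adjacent to at least two vertices of $S$'' directly says that for $u\in\overline{S}$, $u$ has at most one neighbor in $S$; combined with independence of $S$ (so distinct $v,v'\in S$ are non-adjacent, hence $v\notin N[v']$), this yields that $N[v]\cap N[v']=\emptyset$ for $v\ne v'$ in $S$. I should also note that each $N[v]$ is nonempty (it contains $v$), so no cell is vacuous and the count $|S|$ is exact. Everything else is a routine counting argument, and one may optionally remark that this is just the observation that a graph which is a disjoint union of stars (with the $S$-vertices as centers) has domination number equal to the number of stars.
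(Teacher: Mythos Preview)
Your proof is correct and follows essentially the same approach as the paper: both observe that the hypotheses force the closed neighborhoods $\{N[v]:v\in S\}$ to partition $V(G)$, and then note that any dominating set must meet each cell, giving $|D|\ge |S|$. The paper's version is terser, but the argument is the same.
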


\begin{proof}
	The assumption implies that $ \{\{v_1\} \cup N(v_1), . . . , \{v_p\} \cup N(v_p)\} $ is a partition of $ V (G) $. Then for any minimum dominating set $ D $ of $ G $ one has $ |D\cap [\{v_h\} \cup N(v_h)]|\geq 1 $ for every $ h \in \{1, . . . , p\} $. 
	It follows that $|D|\ge |S|$ and hence $ S $ must be a minimum dominating set of $ G $.
\end{proof}

\subsection{Generalized clusters and traversals}

Each tree $T_i$ represents important dependencies in the formed dominant set. Based 
on these dependencies, we may purify the resultant dominant set in different ways.  
In particular, an order, in which the vertices of each cluster are purified is 
important and affects the outcome of the purification process. In fact, a family
of algorithms can implicitly be defined according to the order in which the vertices
of each cluster are considered during the purification process. Straightforward
listing like bottom up and top down not necessarily will give good results. 
An intuitive relevant observation is that a  vertex in a cluster with a ``considerable'' 
number of immediate descendants may be left in the formed dominant set while all
its immediate descendants can be purified (unless such a descendant has a private
neighbor). Another intuitively clear observation is that, for a given vertex of a 
cluster, the number of its neighbors from set $\bar S$ matters: 
vertices with ``large'' number of such neighbors may left in the formed dominant 
set. These intuitive observations can be taken into account in different ways in 
different traversal/purification methods. The theoretical and experimental 
study of such different purification algorithms might well be the subject of another
related research.

A cluster can be seen as a dynamically formed sub-graph of graph $G$. Whenever
a pair of vertices $a,b$ with the edge $(a,b)\in E(G)$, is in a cluster, 
both $a$ and $b$ should have been included into the formed dominant set by 
Greedy. The inverse is not necessarily true, i.e., not all dependencies (edges) 
between the vertices of a cluster are present in that cluster (a cluster is a tree).   
For example, to preserve a tree-structure, we may merely include each new 
vertex as a new root of  that cluster, and there are many other possible
ways to form a cluster. If that vertex is a neighbor of two or more 
vertices from different clusters, then all these clusters will be merged
based on this dependency (again the new vertex might be declared as the
root of the new merged cluster or the merging can be accomplished in some different 
way). A chosen cluster structure affects, in general, the efficiency of 
a chosen traversal/purification method based on that cluster structure. And 
of course, for a fixed cluster structure, different traversal/purification 
methods are possible. Each of them, in general, will have different efficiency.

\section{Stage 2: Purification procedure}

In this section we describe a simple purification method based on the particular cluster structure that we defined in the previous section. Recall that 
in Section 3.1 we argued that different cluster structures and traversal methods might be exploited and tested. Based on the particular cluster structure that we defined in the previous section,  one particular traversal and purification procedure is now described.

Suppose at iteration $h$ of the purification procedure called for cluster $C_i$,
$x \in V(G) \setminus \bar S$ and $v\in T^h_i$. We call vertex $x$ 
{\em semi-private} neighbor of vertex $v$ if $v$ is the only (remaining) 
neighbor of vertex $x$ in the current purified forest 
$T^\prime_1,\dots,T^\prime_{i-1}, T^{h-1}_{i}, T_{i+1}, \dots,T_k$, 
a sub-graph of forest $T=G(P_S)$. Note that a semi-private neighbor of 
vertex $v$ is not necessarily a private neighbor of that vertex in graph $G$.

At iteration $h>1$, we distinguish three types of vertices in tree 
$T^{h-1}_i$: (1) the already purified ones; (2) the ones which were set as 
non-purified at some previous iteration -- these vertices  are called  {\em firm} 
vertices;  and (3) the remaining vertices - the ones, for which no decision yet was 
taken. The last type of vertices will be referred to as {\em pending} ones;
$PF^h_i$ is the set of pending vertices at iteration $h$. 

\begin{itemize}
	\item [(0)] {\it Initially}, at iteration 0, the purification procedure declares all leafs in tree $T_i$ having a private neighbor as firm. Then at iteration 0, it proceeds by purifying the non-firm leafs in the resultant tree. If leaf $l$ is so purified, then its parent is set as firm. The resultant tree $T^0_i$ is the purified tree of the initial iteration 0, and $PF^0_i$ is the resultant set of the vertices which remain pending.

	\item [(1a)] Before any iteration $h\ge 1$, all vertices with semi-private neighbors are set firm. Then the procedure applies one of the two purification rules as described in steps (2) and (3) below. 

	\item [(1b)] At every iteration $h\ge 1$, the {\em highest level leftmost firm vertex with a pending parent} is looked for in tree $T^{h-1}_i$. We denote this firm vertex by $a^h$. If there exists no vertex $a^h$, equivalently, there exists no pending vertex in tree $T^{h-1}_i$, then the purification procedure returns the purified tree $T^\prime_i=T^{h-1}_i$ and stops.

	\item [(2)] The {\it first purification rule} purifies two pending vertices $b$ and $c$ from path $(a,b,c,d)$ of the current tree $T^{h-1}_i$
	at iteration $h$, where $a=a^h$. We call $(a,b,c,d)$ a {\em quadruple}. Note that in such quadruple, vertex $a$ is already firm and vertices $b$ and $c$ are yet pending by iteration $h$; neither vertex $b$ nor vertex $c$ may have a semi-private neighbor by (1a). Thus vertices $b$ and $c$ are purified and vertex $d$ is set firm unless it was earlier
	 set firm.

	\item [(3)] The {\it second purification rule} is applied if there exists no quadruple, i.e., there are no two consecutive pending vertices in tree $T^{h-1}_i$. Then the procedure chooses three consecutive vertices $(a,b,c)$ on a path in that tree, such that, again,  $a=a^h$ and $b$ is a yet pending vertex by iteration $h$. We call $(a,b,c)$  a {\em trio}. Note that vertex $a$ is firm, vertex $b$ is pending and vertex $c$ could have been either set firm or purified at an earlier iteration from another branch. The second rule purifies the intermediate pending vertex $b$.

	\item [(4)] If at iteration $h$ neither the first nor the second purification rule can be applied, i.e., these exist no pending vertex in tree $T^h_i$, then  the procedure returns the purified tree $T^\prime_i=T^h_i$ and stops.

\end{itemize}  

Note that at every iteration $h$, the quadruples and trios are considered as up-going branches in tree $T^h_i$. I.e., in a quadruple $(a,b,c,d)$, $a=a^h$ is the highest level (furthest from the root) fixed vertex, and $d$ is the lowest level (the closest from the root) vertex (in particular, $a$ can be a leaf and $d$ can be the root).

\begin{algorithm*}[]
	\fontsize{10pt}{.3cm}\selectfont
	\caption{Procedure Purify of $T_i$. }\label{alg_purification_stage}
	\begin{algorithmic}
		\State Input: Tree $T_i$ and a graph $ G $.
		\State Output: The purified tree $T^\prime_i $. 
		\State $T^0_i := T_i$;
		\State $PF^0_i := V(T_i)$;
		\For {any leaf $l \in T_i$}
		\If {$l$ has no semi-private neighbor} \hspace{.3cm} \{ set $l$ 
		as purified and set its parent-node $l^-$ firm \} 
		\State $T^0_i := T^0_i\setminus \{l\}$;
		\State $PF^0_i := PF^0_i\setminus \{l,l^-\}$;
		\Else  \hspace{.3cm} \{ set $l$ firm \}  
		\State $PF^0_i := PF^0_i\setminus \{l\}$;
		\EndIf 
		\EndFor
		
		
		Iterative step: 
		\State $h := 0$;
		\While{$PF^h_i\neq\emptyset$ } 
		\State $h := h+1$;
		
		\For{any vertex $v \in PF^h_i$} 
		\If {$x$ has a semi-private neighbor} 
		\State $PF^h_i := PF^{h-1}_i\setminus \{x\}$; \hspace{.3cm} \{ set $x$ firm \}
		\EndIf
		\EndFor
		
		\State $a^h:=$ the highest level leftmost firm vertex with a pending parent in tree $T^{h-1}_i$
		\If {there exists no vertex $a^h \in T^{h-1}_i$} 
		\State return $T^\prime_i := T^{h-1}_i$ and stop;
		\EndIf
		
		\{ there exists vertex $a^h$ \}
		\If{there exists quadruple $(a^h,b,c,d)$ } \hspace{.3cm}\{Note that in such quadruple, vertices $b$ and $c$ are in $PF^{h-1}_i$\} 
		\State $T^h_i := T^{h-1}_i\setminus \{b,c\}$;
		\State $PF^h_i := PF^{h-1}_i\setminus \{b,c,d\}$;
		
		\Else \{there exists no quadruple\}
		\If{there exists trio $(a^h,b,c)$} \hspace{.3cm}\{Note that in such trio, vertex $b \in PF^{h-1}_i$\}
		\State $T^h_i := T^{h-1}_i\setminus \{b\}$;
		\State $PF^h_i := PF^{h-1}_i\setminus \{b\}$;	
		\EndIf 
		\EndIf
		
		\State return $T^\prime_i := T^{h}_i$ and stop;
		
		\EndWhile
		
	\end{algorithmic}
	
\end{algorithm*}

As we show in Lemma \ref{firm} below, a vertex, once purified at iteration $h$ from vertex $a^h$, will not be set firm in another up-going chain at a consequent iteration. Hence, no purified vertex will appear in the returned by the procedure dominant set:\\

\begin{lemma}\label{firm}
If Algorithm 2 purifies vertex $b\in T^{h-1}_i$  
at iteration $h$ then that vertex cannot be set firm at any later iteration.
\end{lemma}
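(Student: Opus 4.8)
The plan is to track the possible ways a vertex $b$ that is purified at iteration $h$ could later be made firm, and to rule out each of them. By the structure of Algorithm 2, a pending vertex is set firm only in one of three situations: (i) at iteration $0$, if it is a leaf with a private neighbor (or the parent of a purified leaf); (ii) at the start of some iteration $h'$, in step (1a), if it acquires a semi-private neighbor; or (iii) as the vertex $d$ at the bottom of a quadruple $(a^{h'},b',c',d)$, or the vertex $c$ in a trio — that is, as the lowest vertex of an up-going chain processed at some iteration $h'$. Since $b$ was purified at iteration $h$, it is not a leaf-with-private-neighbor decision at iteration $0$, so (i) is irrelevant, and we must exclude (ii) and (iii) for all $h' > h$.

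For case (ii): first I would observe that once $b$ is removed from $T^{h}_i$ it is no longer a vertex of the purified forest at any later iteration, so the definition of semi-private neighbor — which quantifies over neighbors of $x$ "in the current purified forest" — can never again refer to $b$; step (1a) only sets currently-pending vertices firm, and $b$ is no longer pending (it is purified). Hence (ii) cannot occur. This is the easy part and is essentially bookkeeping about the data structures $T^{h'}_i$ and $PF^{h'}_i$.

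For case (iii): I would note that whenever the first purification rule removes $\{b,c\}$ from a quadruple $(a^h,b,c,d)$, the accompanying update also removes $b$, $c$, and $d$ from $PF^h_i$; likewise the trio rule removes $b$ from $PF^h_i$ together with its own bottom vertex. So immediately after iteration $h$, vertex $b$ is no longer pending. The key claim is then that a vertex which is not pending is never again eligible to be the bottom vertex $d$ of a future quadruple or the vertex $c$ of a future trio — but that is exactly what the algorithm asks for when it says "$d$ is set firm unless it was earlier set firm", i.e. the effect on a non-pending $d$ is vacuous. The substantive point to verify is that $b$, now removed from the tree, cannot reappear as a node on any path $(a^{h'},b',c',d)$ considered at a later iteration, because all quadruples and trios are sought in the current tree $T^{h'-1}_i$, from which $b$ has been deleted and to which vertices are never re-added.

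The main obstacle I expect is making precise the invariant that the sets $PF^h_i$ and the vertex sets $V(T^h_i)$ are monotonically non-increasing in $h$, and that "purified", "firm", and "pending" partition $V(T_i)$ at every iteration with the purified and firm sets only growing; once this monotonicity invariant is established, the lemma is immediate, since "set firm at a later iteration" would require $b$ to be pending at that later iteration, contradicting that $b$ left $PF_i$ at iteration $h$ and never returns. I would therefore structure the proof as: (1) state and prove by induction on $h$ the monotonicity/partition invariant from inspection of each branch of Algorithm 2; (2) conclude that a purified vertex is permanently non-pending; (3) observe that each of the three firm-setting mechanisms acts only on pending vertices (for (1a), because it scans $PF^h_i$; for the quadruple/trio rules, because $b,c$ are required to lie in $PF^{h-1}_i$ and the "unless earlier set firm" clause makes the action on $d$ vacuous otherwise). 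Combining these yields the claim.
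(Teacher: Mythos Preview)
Your proposal is correct, and it takes a genuinely different route from the paper's own proof.

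The paper argues by contradiction using the \emph{level structure} of the tree: if the purified vertex $b$ were set firm at a later iteration $g$, this would happen via a quadruple $(a^g,\cdot,\cdot,b)$, so $a^g$ sits three levels below $b$; but the vertex $a^h$ from which $b$ was purified sits only one or two levels below $b$. Hence $a^g$ lies at a strictly higher level than $a^h$, contradicting the selection of $a^h$ at iteration $h$ as the highest-level firm vertex with a pending parent. The paper's argument thus hinges on comparing depths of the anchors $a^h$ and $a^g$.

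Your approach, by contrast, ignores levels entirely and rests on a monotonicity invariant for the data structures $V(T^h_i)$ and $PF^h_i$: both sets only shrink, a purified vertex leaves both permanently, and every firm-setting mechanism in Algorithm~2 (iteration~0, step~(1a), and the quadruple/trio rules) acts only on vertices that are still in the current tree and still pending. This is more elementary and handles all three firm-setting mechanisms uniformly in one stroke, whereas the paper's proof treats only the quadruple case explicitly. Your argument also sidesteps a point the paper leaves implicit, namely why $a^g$ would already have been firm (with a pending parent) back at iteration $h$; your invariant makes that question moot. What the paper's approach buys is a more geometric picture of \emph{why} the bottom-up traversal cannot revisit a purified vertex, tied to the ``highest-level'' selection rule; your approach trades that intuition for a shorter and more robust bookkeeping argument.
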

\begin{proof}
Let vertex $b$ be purified from vertex $a^h$ at iteration $h$. By way of contradiction, suppose vertex $b$ is set firm from vertex $a^g$ at iteration $g>h$. By the construction of the procedure, there must exist two intermediate vertices between vertices $a^g$ and $b$ , as otherwise vertex $b$ would not have been set firm. But then the level of vertex $a^g$ is greater than that of vertex $a^h$ and hence the procedure could not select vertex $a^h$ at iteration
$h$ as the highest level firm vertex.
\end{proof}

\begin{lemma}\label{purification}
For any cluster $C_i$ and for any purified vertex $p\in T_i$, Algorithm 2 leaves either a son or the parent of vertex $p$ in the resultant purified tree $T^\prime_i$. Moreover, by eliminating the latter vertex from tree $T^\prime_i$, vertex $p$ will be left uncovered. Therefore, if the set of vertices delivered by the procedure is a dominating set, then it is also a minimal dominating set. 
\end{lemma}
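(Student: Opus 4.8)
The plan is to prove the lemma by tracking what the two purification rules do to each purified vertex and then deducing minimality from the "moreover" clause. The statement has three parts: (a) for every purified $p$, either a son or the parent of $p$ survives in $T'_i$; (b) removing that surviving vertex leaves $p$ uncovered; (c) if the output is dominating, it is minimal. I would organize the argument as a case analysis on which rule purified $p$, then a short lemma-level argument for (b), and finally a one-line deduction of (c).

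First I would handle part (a) by going through the three places where a vertex gets purified: the initial step (0), the first purification rule (2), and the second rule (3). In step (0), a leaf $l$ is purified only when it has no semi-private neighbor, and its parent $l^-$ is simultaneously set firm; by Lemma \ref{firm} a firm vertex is never later purified, so $l^- \in T'_i$, and $l^-$ is the parent of $l$. In rule (2), a quadruple $(a,b,c,d)$ is processed: $b$ is purified and its child $a$ is already firm (it is $a^h$), while $c$ is purified and its child $b$ was just purified — so for $c$ we must use the parent side: $d$ is set firm, hence $d \in T'_i$, and $d$ is the parent of $c$. Wait — I need to double-check the orientation: the paper says quadruples are up-going, $a=a^h$ is furthest from the root and $d$ is closest, so $a$ is the son of $b$, $b$ is the son of $c$, $c$ is the son of $d$. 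Thus for $b$ the surviving neighbor is its son $a$ (firm), and for $c$ the surviving neighbor is its parent $d$ (firm). In rule (3), a trio $(a,b,c)$: only $b$ is purified, $a=a^h$ is firm and is $b$'s son, so the son survives. In every case exactly one of $\{$son, parent$\}$ of $p$ is certified firm at the moment of purification, and Lemma \ref{firm} guarantees it stays in $T'_i$.

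For part (b) — that removing this surviving vertex, call it $q$, uncovers $p$ — I would argue that at the moment $p$ is purified, $p$ has no semi-private neighbor and no other remaining neighbor in the forest besides $q$ that is firm, but more to the point, once $p$ is purified the procedure never purifies a vertex that would strand a firm neighbor: by step (1a) every vertex with a semi-private neighbor is set firm before a rule is applied, so when later a vertex $y \in \bar S$ is covered only by $q$ in the remaining forest, $q$ becomes firm and is never removed. Hence in the final $T'_i$, the vertex $q$ is the \emph{only} neighbor of $p$ within the dominant set that the procedure kept — so deleting $q$ from $T'_i$ leaves $p$ with no neighbor in the set, i.e. $p$ is uncovered. (I should be careful here: "uncovered" is with respect to the full output $S^*$, not just $T'_i$; but edges of $G$ between $p$ and vertices of other clusters would have to be accounted for. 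I expect the intended reading is that such cross-cluster edges are handled because $P_S$ and the clustering already capture all the relevant domination dependencies — this is the subtle point.)

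Finally, part (c) is immediate from (a)–(b): a set $D$ is a minimal dominating set iff it is dominating and every $v \in D$ has a private neighbor (a vertex, possibly $v$ itself, dominated only by $v$). For a purified vertex, it is simply not in the set. For a vertex $q$ that survives and is the certified neighbor of some purified $p$, part (b) says $p$ is a private neighbor of $q$ in the output; for a surviving vertex that was set firm in step (0) or (1a) because of a (semi-)private neighbor, that neighbor witnesses minimality directly. So assuming the output is dominating, no vertex can be removed without destroying domination, which is exactly minimality. The main obstacle I anticipate is making part (b) airtight across cluster boundaries — showing that the "only neighbor in the current purified forest $T'_1,\dots,T'_{i-1},T^{h-1}_i,T_{i+1},\dots,T_k$" condition used to define semi-private neighbors really does certify non-coverage in the \emph{final} combined output $S^* = \bigcup_i V(T'_i)$, given that later clusters $T_{i+1},\dots,T_k$ are themselves purified afterwards and could lose the vertex that was covering $p$. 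Resolving this likely requires observing that a vertex covering $p$ in a later cluster would have made $p$ a semi-private neighbor at that later stage, forcing it firm — so the bookkeeping is consistent, but this is the step that needs care rather than the routine case analysis.
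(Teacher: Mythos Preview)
Your case analysis for part (a) is correct and in fact spells out what the paper compresses into the single sentence ``The first claim follows directly from the construction.''

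For part (b) you and the paper take genuinely different routes. You argue from the side of the purified vertex $p$: you want to show that the firm neighbour $q$ is the \emph{only} surviving neighbour of $p$ in $S^*$, so that deleting $q$ strands $p$. The paper instead argues from the side of the firm vertex: given a firm vertex $a$ (your $q$), it case-splits on the local structure \emph{below} $a$ in $T_i$. If $a$ is a leaf, it was set firm because of a semi-private neighbour, and that neighbour is the witness. If $a$ has a son that is a leaf, that leaf was purified at step~(0) and now depends on $a$. If no son of $a$ is a leaf, the paper argues that some son $x$ of $a$ was purified together with a son of $x$, so $x$ depends on $a$. In each case the vertex left uncovered by deleting $a$ is a child or a semi-private neighbour of $a$ --- not, in general, the original $p$ above $a$. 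So the paper does not literally establish the ``moreover'' clause as you read it; it proves the weaker but sufficient statement that every firm vertex is essential, which is exactly what part (c) needs.

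The obstacle you flag --- that $p$ may have neighbours in $S^*$ through edges of $G$ outside the forest (either to $S'=S\setminus V(G(P_S))$ or via non-$P_S$ edges to other clusters) --- is a real gap in your route, and the semi-private bookkeeping you sketch does not close it, because semi-private is defined relative to the current forest only, not to all of $S^*$. The paper's route largely sidesteps this by always exhibiting a witness that is either a semi-private neighbour of the firm vertex or a purified vertex \emph{both} of whose tree-neighbours are gone, keeping the whole argument inside the forest. Your deduction of (c) is sound once one accepts the paper's weaker form of (b); the minimality of the vertices in $S'$ is handled separately in the paper (in the proof of Theorem~\ref{t-lemma-upper-bound}) via a Hall-type system-of-representatives argument, not by this lemma.
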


\begin{proof}The first claim follows directly from the construction. We prove
the second one. Every purified vertex from tree $T_i$ is either the parent
or a son of a firm vertex. Consider two adjacent purified vertices, say
$b$ and $c$, so that $b$ is the parent of a firm vertex $a=a^h$ and $c$ is a son
of another firm vertex $d$; here $d$ is the closest one to the root, and
$(a,b,c,d)$ is a (directed) path in tree $T_i$ in the corresponding up-going 
branch. If vertex $a$ is a leaf then it has a semi-private neighbor, hence if 
vertex $a$ is purified then the resultant set will not be dominant. 
If vertex $a$ has a son which is a leaf, then that leaf is purified, hence if 
vertex $a$ is also purified then the resultant set will not be dominant again. 
If none of the sons of vertex $a$ is a leaf, then again, at least one
of them, say $x$, should have been purified as otherwise vertex $a$ could not have 
been set as a firm vertex. Moreover, a son of vertex $x$ should have also been
purified as otherwise vertex $a$ would also have been purified. 
It follows that vertex $a$ cannot be purified without leaving 
some purified vertex uncovered. Similar line of reasoning can be applied to vertex 
$d$, which proves the second claim. The third one obviously follows.
\end{proof}

A formal description of our overall two-stage algorithm  is given below.

\medskip

\begin{algorithm*}
	\caption{The overall algorithm}\label{algoritmo2}
	\begin{algorithmic}
		\State \textbf{Input:} graph $ G $ 
		\State \textbf{Output:} a dominating set $ S^* $ of graph $ G $, 
		 $ S^* \subseteq S $. \\
		
		\State Call Algorithm Stage-One to form dominating set $ S $ of graph $ G $ 
		
		\State Construct the forest $ G(P_S) $  consisting of the trees 
		$ T_1, \ldots , T_k $; 
		
		\State  $ S^* := S \setminus V (G(P_S)) $; 
				
		\For{$ i = 1, \ldots, k $} 
		
		\State Call $ Procedure\;Purify (T_i,G) $ to form purified trees $ T^\prime_i $ ;
		\State $ S^*:= S^*\cup V(T^\prime_i)$ ;

		\EndFor
		
		\State \textbf{Output} $ S^*$ 
	\end{algorithmic}

\end{algorithm*}

\medskip
\newpage

\subsection{Implementation issues}

We represent  the vertices from forest $T$ adjacent to vertex $x\in V(G)$ in a balanced binary search tree $l^0(x)$ (the nodes of these trees are ordered according to their indices). We keep  each vertex  as a pointer record data structure. Whenever vertex $v\in V(T^h_i)$ is purified at iteration $h$, we delete the node representing that vertex from every binary search tree containing a node representing vertex $v$. In this way, at every iteration $h$, for every vertex $x\in \overline {S}$, the number of the nodes in tree $l^h(x)$ equals to the number of the non-purified neighbors of vertex $x$. The number of nodes in these trees is gradually decreasing as new vertices get purified. At the same time, by the construction, for no iteration $h$ and no vertex $x$, tree $l^h(x)$ may remain empty.

It is a known fact that the number of vertices in a minimum dominating set is bounded 
above by  $\frac{n}{2}$, i.e., $\gamma(G)\leq \frac{n}{2}$, for every simple graph 
$G$ of order $n$, and this bound is tight (see the example in Figure \ref{fig_corona}
below). In Theorem \ref{t-lemma-upper-bound}, we prove that the same bound 
is valid for the dominating set delivered by the overall algorithm (Algorithm \ref{algoritmo2}). We first give
the example and then present Theorem \ref{t-lemma-upper-bound}.

The corona product of graphs $G_1$ and $G_2$, $G_1 \odot G_2$ is defined as the graph obtained 
from graphs $G_1$ and $G_2$ by taking one copy of $G_1$ and $n_1$ copies of $G_2$ and joining 
by an edge each vertex from the $i^{th}$-copy of $G_2$ with the $i^{th}$-vertex of $G_1$ ($n_1$ 
is the order of $G_1$). There are known corona product graphs of the form $G = H \odot K_1$, for 
which the upper bound $\frac{n}{2}$ is tight; here $H$ is an arbitrary graph and $K_1$ is a 
complete graph of order one. It can be easily seen that  Algorithm \ref{algoritmo2} attains 
this upper bound for such graphs.\\

\begin{figure}[h!]
\centering
\begin{tikzpicture}[scale=.9, transform shape]

\node [draw, shape=circle] (c1) at  (0,0) {};
\node [draw, shape=circle] (c2) at  (2,0) {};
\node [draw, shape=circle] (c3) at  (2,2) {};
\node [draw, shape=circle] (c4) at  (0,2) {};
\node [draw, shape=circle] (c5) at  (1,3) {};

\node [draw, shape=circle] (k1) at  (-1,1) {};
\node [draw, shape=circle] (k2) at  (3,1) {};
\node [draw, shape=circle] (k3) at  (3,3) {};
\node [draw, shape=circle] (k4) at  (-1,3) {};
\node [draw, shape=circle] (k5) at  (1,4) {};

\draw(c1)--(c2)--(c3)--(c5)--(c4)--(c1);
\draw(k1)--(c1);
\draw(k2)--(c2);
\draw(k3)--(c3);
\draw(k4)--(c4);
\draw(k5)--(c5);

\end{tikzpicture}
\caption{\small Graph $C_5 \odot K_1$}\label{fig_corona}
\end{figure}
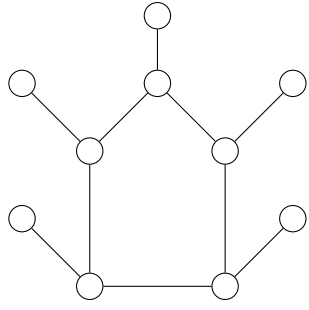
\newpage

\begin{theorem}\label{t-lemma-upper-bound}
The purification procedure delivers a minimal dominating set $S^*$ in time $O(n^2 \log n)$ and $|S^*|\leq \frac{n}{2}$.  
\end{theorem}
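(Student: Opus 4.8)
The statement bundles three claims: correctness (that $S^*$ is a dominating set), minimality, and the two quantitative bounds (time $O(n^2\log n)$ and $|S^*|\le n/2$). I would prove them in that order, since minimality and the cardinality bound both presuppose that $S^*$ dominates $G$.

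First I would establish that $S^*$ is a dominating set of $G$. Recall $S^* = (S\setminus V(G(P_S))) \cup \bigcup_i V(T^\prime_i)$. By Observation \ref{prop4}, each vertex of $S\setminus V(G(P_S))$ is independent and has a private neighbor outside $S$, so those vertices and their neighborhoods are handled. It remains to show that for each cluster $C_i$, the purified tree $T^\prime_i$ together with the rest of the current forest still covers everything that $V(T_i)$ covered in $G$. The key device is the notion of a \emph{semi-private} neighbor: step (1a) sets firm every vertex that is currently the unique remaining forest-neighbor of some $x\in\overline S$, so no such $x$ can ever be left uncovered by a later purification. Combined with Lemma \ref{firm} — a purified vertex is never later re-declared firm — and Lemma \ref{purification} — for every purified $p$, either a son or the parent of $p$ survives in $T^\prime_i$ — one argues that every vertex of $\overline S$ retains at least one neighbor in $S^*$ and every purified vertex of $S$ itself retains a neighbor in $S^*$. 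This requires a careful case analysis along an up-going branch $(a,b,c,d)$ (a quadruple) or $(a,b,c)$ (a trio), checking that $a$ (hence the coverage it provided) is firm and that $d$ (resp. $c$) is firm, so $b,c$ (resp. $b$) are safely removable. Minimality is then immediate from Lemma \ref{purification}, whose last sentence already asserts exactly this conditional.

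For the time bound, I would account iteration by iteration. Stage 1 and the construction of $G(P_S)$ are not part of the claim (Stage 1 is $O(n^3)$), so I only need Procedure Purify summed over all clusters. Using the balanced-binary-search-tree data structure $l^h(x)$ from the Implementation subsection: there are $O(n)$ vertices total across all trees $T_i$, hence $O(n)$ purification iterations in total; at each iteration we scan pending vertices, test for semi-private neighbors (a size check on the trees $l^h(x)$, i.e.\ $O(\log n)$ per incident vertex, $O(n\log n)$ total over the iteration in the worst case), locate $a^h$ as the highest-level leftmost firm vertex with a pending parent, and perform a bounded number of deletions, each costing $O(\log n)$. Multiplying $O(n)$ iterations by $O(n\log n)$ work each gives $O(n^2\log n)$; I would double-check that the search for $a^h$ and the maintenance of the level/leftmost ordering do not exceed this, which is the only mildly delicate bookkeeping point.

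Finally the bound $|S^*|\le n/2$. I would argue it cluster-locally plus globally. For the part $S\setminus V(G(P_S))$, each such vertex has a distinct private neighbor in $\overline S$, so it contributes at most half of the vertices it and its private neighbor span. For each purified tree $T^\prime_i$: by Lemma \ref{purification} every vertex remaining in $T^\prime_i$ can be charged to a distinct vertex of $V(G)\setminus T^\prime_i$ — a purified sibling/parent inside $T_i$, or a semi-private/private neighbor in $\overline S$ — and one must check these charges are disjoint across clusters (they are, since the $S(h)$ are pairwise disjoint and clusters are vertex-disjoint). Summing, $|S^*|\le |V(G)\setminus S^*|$, i.e.\ $|S^*|\le n/2$. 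The main obstacle I anticipate is precisely making this charging scheme airtight: one has to be sure that a surviving firm vertex $a$ whose "witness" purified descendant might itself have been charged by another firm vertex does not get double-counted, i.e.\ that the map from $S^*$ to its witnesses in $\overline{S^*}$ is genuinely injective and not merely "each vertex has some witness." Getting this injectivity right — via the up-going-branch structure, which pairs each firm $a^h$ with the two freshly purified $b,c$ of its own quadruple (or the one $b$ of its trio) — is the crux, and I would structure that part of the proof around a fixed choice of witness per surviving vertex dictated by the iteration at which it was declared firm.
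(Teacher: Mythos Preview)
Your overall architecture is sensible and your time-complexity sketch matches the paper's, but there is a real gap in your $|S^*|\le n/2$ argument, and it is precisely in the part you treat as easy.

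You write that each vertex of $S'=S\setminus V(G(P_S))$ ``has a distinct private neighbor in $\overline S$.'' Observation~\ref{prop4} gives you neither privacy nor distinctness: it only says each such vertex has \emph{some} neighbor in $V(G)\setminus S$. Two vertices of $S'$ can share all of their outside neighbors, so you cannot simply charge one-to-one. This is exactly the place where the paper does real work: it invokes Hall's Theorem to show that $S'$ admits a \emph{system of distinct representatives} in $N(S')\setminus S$. The proof is by contradiction---assume some $A'\subseteq S'$ violates Hall's condition, take $A'$ minimal, let $a'$ be the last vertex of $A'$ added by Greedy, and then argue that the active degree of $a'$ at that iteration was $0$ while some neighbor of $a'$ still had active degree at least $1$, contradicting the greedy choice. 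That active-degree argument is the missing idea in your proposal; a bare charging scheme cannot replace it because the injectivity you need for $S'$ is not structural (there is no tree to lean on) but comes from the dynamics of the greedy algorithm.

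For the purified-tree part your instinct is closer to the mark: the up-going quadruple/trio structure does let you pair each surviving firm vertex with a freshly purified neighbor of its own iteration, and your worry about double-counting is the right worry. The paper packages both parts uniformly by asserting that $S^*$ has a system of representatives in $V(G)\setminus S^*$ and reading off both minimality and $|S^*|\le n/2$ from that single statement; you are effectively trying to build that injection by hand. That is fine for the tree part, but for $S'$ you will not get the injection without the Hall/active-degree step.
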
 

\begin{proof}
Let $ G = (V, E) $ be a graph and $ A, B \subseteq V  $ with $ A \cap B = \emptyset $. Let us say that $ A = \{a_1, \ldots , a_r\} $ (for some integer $ r $) has a \textit{system of representatives in} $ B $ if there exists $ \{b_1, \ldots , b_r\} \subseteq B  $ such that $ (a_i, b_i) \in E  $ for every $ i \in \{1, \ldots , r\} $. Let us recall that Hall’s Theorem states that $ A $ has a system of representatives in $ B $ if and only if for any $ A^{\prime} \subseteq A $ one has $ |A^{\prime}| \leq |N(A^{\prime}) \cap B| $.

Let us write $ S^\prime = S \setminus V (G(P_S)) $. Then, by Observation \ref{prop4}, $S^\prime$ is an independent set of $ G $ and every vertex of $S^\prime$ has some neighbor in $ V (G) \setminus S $ [i.e. in $ N(S^\prime) \setminus S $]. \\

\textit{Claim 1.} $S^\prime$ has a system of representatives in $ N(S^\prime) \setminus S $.\\

\textit{Proof.} For brevity let us write $ N^\prime(S^\prime) = N(S^\prime) \setminus S $.\\
By contradiction, assume that $S^\prime$ has not a system of representatives in $N^\prime(S^\prime)$. Then by Hall’s Theorem there is a subset $ A^\prime \subseteq S^\prime  $ such that $ |A^\prime| > |N(A^\prime) \cap N^\prime(S^\prime)| $. Let us assume without loss of generality that $ A^\prime $ is \textit{minimal} with respect to the above property. Then let $ a^\prime $ be the last vertex of $ A^\prime $ which [in Algorithm Stage-One] is added to the greedy solution, say at iteration $ h $.\\
Let us consider the active degree of $ a^\prime $ at iteration $ h $. Note that, because of the minimality of $ A^\prime $, every vertex of $ N(A^\prime) \cap N^\prime(S^\prime) $ has a neighbor in $ A^\prime \setminus \{a^\prime\} $, i.e., is covered by $ S_{h-1} $ and does not increase the active degree of $ a^\prime $. Then let $ b^\prime $ be another possible neighbor of $ a^\prime $ with $ b^\prime \notin N^\prime(S^\prime) $: if $b^\prime \in S_{h-1} $,  then it does not increase the active degree of $ a^\prime $; if $b^\prime \notin S_{h-1} $, then by construction $ b^\prime \in S \setminus S_{h-1} $, but since $ a^\prime \in V (G(P_S))  $ and
$ b^\prime \notin V (G(P_S))  $ one has by Proposition \ref{tight} that $ b^\prime $ is covered by some vertex of $ S_{h-1} $ and does not increase the active degree of $ a^\prime $. Then the active degree of $ a^\prime $ is equal to $ 0 $. \\
On the other hand, at iteration $ h $, all neighbors of $ a^\prime $ in $ N(A^\prime) \cap N^\prime(S^\prime) $ are in $ S_{h-1} $ [in fact otherwise $ a^\prime $ would be covered by $ S_{h-1} $ and then, having active degree equal to $0$, would be not added to the greedy solution]:  then $ a^\prime $ is uncovered by $ S_{h-1} $, so that the active degree of any neighbor of $ a^\prime $ in $ N(A^\prime) \cap N^\prime(S^\prime) $ is at least 1, because of $ a^\prime $. But this is a contradiction since, at iteration $ h $, vertex $ a^\prime $ is chosen by Algorithm Stage-One. $\Diamond$\\

Finally let us observe that, since $ S^* $ has a system of representatives in $ V (G) \setminus S^* $, one has that $ S^* $ is a minimal dominating set of $ G $ and is such that $ |S^*| \leq |V (G)|/2 $.

As to the time complexity, let $n_1$ be the number of vertices in 
$V\setminus V(T)$ and $n_2$ be the number of vertices in forest $T$, 
respectively ($n_1+n_2\le n$).  The number of trees $l^{h-1}(x)$, 
$x\in V\setminus V(T_1\cup\dots\cup T_k)$, is $n_1$, 
and the number of vertices in each tree is bounded by $n_2$. At an iteration $h$: 
(i) vertex $a^h$ can be determined in time $O(n_2\log n_2)$; (ii) time $O(\log n_2)$ is 
required to locate and delete the vertex representing a purified vertex $v$ in 
each of the $n_1$ binary trees, whereas there are at most two purified vertices. 
Hence, the total cost at iteration $h$ is $O(n\log n)$. The total number of 
iterations (for purifying all the trees $T_1,\dots, T_k$), roughly, is 
$\frac{n_2}{3}$. It follows that the time complexity of the purification procedure is 
$O(n^2\log n)$.  
\end{proof}

As a result of the above results, we immediately obtain the next corollary:\\ 

\begin{cor}
	The overall algorithm is correct and can be executed in time 
	$ O(n^3) $.
\end{cor}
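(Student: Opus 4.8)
The plan is to read this corollary as a composition of the facts already proved for the two stages, so the proof should be short. For correctness I would argue as follows. Algorithm Stage-One halts with a set $S$ that dominates $G$ (this is the exit condition of its while-loop, and Theorem~\ref{time-1} guarantees halting). Algorithm~\ref{algoritmo2} then sets $S^*:=S\setminus V(G(P_S))$ and augments it with $V(T'_i)$ for each purified tree, so $S^*\subseteq S$ and every vertex of $S\setminus V(G(P_S))$ survives. To see $S^*$ is dominating I would carry the invariant that, throughout the $k$ successive calls to Procedure Purify, the current forest together with $S\setminus V(G(P_S))$ covers every vertex of $\overline S$, while every already-purified forest vertex stays covered within the forest. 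The latter is Lemma~\ref{purification} (a purified vertex keeps a son or its parent). The former is enforced by the semi-private mechanism of steps (1a)--(3): any vertex whose last remaining forest-neighbour would be removed is first set firm, and any vertex of $\overline S$ adjacent to $S\setminus V(G(P_S))$ is covered unconditionally since that part of $S^*$ is never altered. Minimality then follows from Lemma~\ref{purification} for purified vertices and from Observation~\ref{prop4} for the vertices of $S\setminus V(G(P_S))$; this is exactly the conclusion recorded in Theorem~\ref{t-lemma-upper-bound}, which I would simply cite.

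For the running time I would add the three phases of Algorithm~\ref{algoritmo2}. Stage 1 costs $O(n^3)$ by Theorem~\ref{time-1}. The forest $G(P_S)$ can be recorded while Greedy runs (each iteration already examines $N(v_h)$ and the current cover, so flagging the tied pairs $(v_h,v)$ with $v\in S\cap S(h)$ costs only $O(n)$ more per iteration) or by an $O(n^2)$ post-processing scan; in either case this is absorbed by $O(n^3)$, and initializing $S^*$ is $O(n)$. The loop over $T_1,\dots,T_k$ costs $O(n^2\log n)$ in total by Theorem~\ref{t-lemma-upper-bound} (the number of iterations over all trees is $O(n)$ and each iteration is $O(n\log n)$ with the balanced-search-tree structures $l^h(\cdot)$), and accumulating the $V(T'_i)$ into $S^*$ adds $O(n)$. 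Hence the total is $O(n^3)+O(n^2\log n)+O(n)=O(n^3)$.

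The substance of the corollary is therefore bookkeeping, and the single delicate point — which I expect to be the main obstacle — is the domination invariant across the $k$ separate invocations of Procedure Purify. One must verify that the semi-private test, which is taken relative to the whole current forest $T'_1,\dots,T'_{i-1},T^{h-1}_i,T_{i+1},\dots,T_k$ and not merely to the tree $T_i$ in hand, genuinely prevents any vertex of $\overline S$ from losing all of its forest-neighbours as trees are purified in sequence, and that Lemma~\ref{firm} stops a vertex purified in one up-going chain from being reinstated later, so no purified vertex reappears in $S^*$. Once these are in place, every step (0)--(4) preserves the invariant, and both assertions of the corollary follow by direct composition of Theorems~\ref{time-1} and~\ref{t-lemma-upper-bound}.
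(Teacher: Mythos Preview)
Your proposal is correct and takes essentially the same approach as the paper: the corollary is obtained by directly composing Theorem~\ref{time-1} (Stage~1 runs in $O(n^3)$) with Theorem~\ref{t-lemma-upper-bound} (purification is correct and runs in $O(n^2\log n)$), and the paper in fact states only that the corollary follows ``as a result of the above results'' without further argument. Your write-up simply makes explicit the bookkeeping---the cost of building $G(P_S)$ and the domination invariant across the $k$ calls to Procedure Purify---that the paper leaves implicit.
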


\section{The comparative study and the approximation ratios}

In this section we give a comparison, in terms of the
time complexity and approximation ratio of our  algorithm with the state-of-the-art 
polynomial-time approximation algorithms mentioned in the introduction.

\subsection{Comparative study}

A heuristic described in Parekh in \cite{parekh}, iteratively, extends the 
currently formed set of vertices with a vertex that covers the maximum number 
of yet uncovered vertices until a dominating set is obtained. As earlier noted, 
this algorithm is essentially the same as Greedy Algorithm. The paper gives 
neither the worst-case performance ratio nor the time complexity analysis. But
it gives  the bound $n+1-\sqrt{2m+1}$  on the total 
number of vertices in the generated dominating set. It can be
readily verified that the proof of this bound in \cite{parekh} is essentially based 
on the fact that the  edges incident in the pending vertices of each 
selected vertex $v_{h}$  are repeatedly  eliminated in the 
heuristic resulting in term $-\sqrt{2m+1}$. The same argument can be applied to 
the sequence of our graphs $G_0,G_1,\dots,G_{h_{\max}}$ in Greedy Algorithm. 

The algorithm in \cite{parekh} leaves redundant vertices in the resultant dominating 
set, similarly to Greedy Algorithm, which is not, in general, minimal. We 
illustrate the performance of the former algorithm and our extended algorithm 
on the following simple examples.  In Figure \ref{fig_1991} a graph with $n=15$ 
vertices  and $m=24$ edges is depicted. 

The solution generated by the algorithm in \cite{parekh} for this graph
consists of nine vertices 
(filled ones in Figure \ref{fig_1991}), whereas our extended algorithm obtains an 
optimal dominating set with six vertices (bold ones in Figure \ref{fig_1991}).
For the example of Figure \ref{fig_1991}B, the algorithm in \cite{parekh} obtains a
dominating set with cardinality 13, whereas our algorithm obtains again an optimal
solution with 9 vertices (note that, using  in these examples, the corresponding 
families of graphs can easily be constructed).

 \begin{figure}[h]
\centering
\begin{tikzpicture}[scale=.48, transform shape]

\node[draw] at (-4,3) {A};
\node[draw] at (6,3) {B};

\node [draw, shape=circle, fill=black!20] (s8) at  (1,4) {};
\node [draw, shape=circle, fill=black!20] (s9) at  (-1,4) {};
\node [draw, shape=circle, fill=black!20] (s1) at  (0,4) {};

\node [draw, shape=circle, fill=black] (s2) at  (2,0) {};
\node [draw, shape=circle, fill=black] (s3) at  (0,1) {};
\node [draw, shape=circle, fill=black] (s4) at  (-2,0) {};
\node [draw, shape=circle, fill=black] (s5) at  (-2,-2) {};
\node [draw, shape=circle, fill=black] (s6) at  (0,-3) {};
\node [draw, shape=circle, fill=black] (s7) at  (2,-2) {};

\node [draw, shape=circle] (l2) at  (3,1) {};
\node [draw, shape=circle] (l3) at  (0,2.5) {};
\node [draw, shape=circle] (l4) at  (-3,1) {};
\node [draw, shape=circle] (l5) at  (-3,-3) {};
\node [draw, shape=circle] (l6) at  (0,-4.5) {};
\node [draw, shape=circle] (l7) at  (3,-3) {};

\draw(s2)--(l2);
\draw(s3)--(l3);
\draw(s4)--(l4);
\draw(s5)--(l5);
\draw(s6)--(l6);
\draw(s7)--(l7);
\draw(s2)--(s3)--(s4)--(s5)--(s6)--(s7)--(s2);

\draw(s1)--(s2); \draw(s1)--(s4);
\draw(s1)--(s5); \draw(s1)--(s7);

\draw(s8)--(s2); \draw(s8)--(s3);
\draw(s8)--(s6); \draw(s8)--(s7);

\draw(s9)--(s3); \draw(s9)--(s4);
\draw(s9)--(s5); \draw(s9)--(s6);

\node [draw, shape=circle] (r1) at  (6,-3) {};
\node [draw, shape=circle] (r2) at  (8,-3) {};
\node [draw, shape=circle] (r3) at  (10,-3) {};
\node [draw, shape=circle] (r4) at  (12,-3) {};
\node [draw, shape=circle] (r5) at  (14,-3) {};
\node [draw, shape=circle] (r6) at  (16,-3) {};
\node [draw, shape=circle] (r7) at  (18,-3) {};
\node [draw, shape=circle] (r8) at  (20,-3) {};

\node [draw, shape=circle, fill=black] (p1) at  (6,-1) {};
\node [draw, shape=circle, fill=black] (p2) at  (8,-1) {};
\node [draw, shape=circle, fill=black] (p3) at  (10,-1) {};
\node [draw, shape=circle, fill=black] (p4) at  (12,-1) {};
\node [draw, shape=circle, fill=black] (p5) at  (14,-1) {};
\node [draw, shape=circle, fill=black] (p6) at  (16,-1) {};
\node [draw, shape=circle, fill=black] (p7) at  (18,-1) {};
\node [draw, shape=circle, fill=black] (p8) at  (20,-1) {};

\node [draw, shape=circle, fill=black!20] (q1) at  (7,1) {};
\node [draw, shape=circle, fill=black!20] (q2) at  (11,1) {};
\node [draw, shape=circle, fill=black!20] (q3) at  (15,1) {};
\node [draw, shape=circle, fill=black!20] (q4) at  (19,1) {};

\node [draw, shape=circle, fill=black] (x1) at  (13,4) {};

\draw(r1)--(p1)--(q1)--(x1);
\draw(r2)--(p2)--(q1);
\draw(r3)--(p3)--(q2)--(x1);
\draw(r4)--(p4)--(q2);
\draw(r5)--(p5)--(q3)--(x1);
\draw(r6)--(p6)--(q3);
\draw(r7)--(p7)--(q4)--(x1);
\draw(r8)--(p8)--(q4);

\end{tikzpicture}
\caption{\small The solution by algorithm in \cite{parekh} is represented by the
filled vertices, whereas that of  Algorithm \ref{algoritmo2}  is represented by the
dark ones.}\label{fig_1991}
\end{figure}
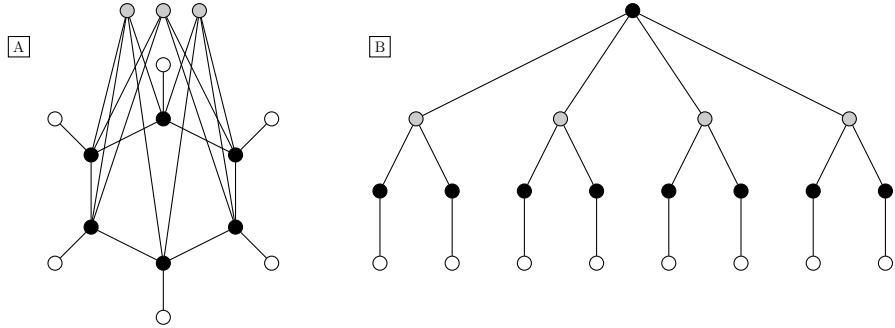

Campan \emph{et al.}  \cite{campan} compare practical performance of five polynomial approximation algorithms, three of them were proposed earlier in Eubank
\emph{et al}. \cite{eubank}. Based on this experimental study, one of the 
algorithms, referred to  as \textbf{Alg. 4}, turned out to give the best results.
Neither the approximation ratio nor time complexity of any of the considered
algorithms was shown. Nevertheless, we again compare the performance of the latter algorithm and our Algorithm \ref{algoritmo2} on simple examples. 

We form a family graphs 
$\mathcal{T}(G,H)$ as follows. Initially, we do the corona product $G \odot H$ of 
two arbitrary graphs $G$ and $H$. Then, for each vertex $v\in G$ we add a path 
graph $P_2$ of vertices $u$ and $w$ introducing edge $(v,u)$.
We show  graph $\mathcal{T}(P_2,P_4)$ in Figure \ref{fig4}A. 
For a graph from this family, \textbf{Alg. 4} obtains a dominating set with 
cardinality $n(t + 1)$, where $n$ is the order of $G$ and $t$ is the cardinality 
of the solution given by \textbf{Alg. 4} for graph $H$. 
Algorithm \ref{algoritmo2} obtains an optimal dominating set of cardinality 
$2n \leq n(\gamma(H) + 1) \leq n(t + 1)$. 

We form another  family of graphs $\mathcal{T'}(G,N_p)$ from graph $G$ of order 
$n$ and null graph $N_p$ of order $p$, $p > n$. We connect each vertex in $G$ 
with all vertices in $N_p$ by an edge, and for each vertex $v'\in G$ a path graph $P_2$ 
of vertices $u'$ and $w'$ is added by introducing edge $(v',u')$. Note that the 
resultant graph is of order $3n+p$. We illustrate graph $\mathcal{T'}(P_2,N_4)$ from 
family $\mathcal{T'}(G,N_p)$ in Figure \ref{fig4}B. 
For a graph of order $n$ from the
family,  \textbf{Alg. 4} obtains a dominating set with cardinality $p + n$, 
a magnitude greater than $\frac{3n+p}{2}$. Algorithm \ref{algoritmo2}  obtains a 
dominating set of cardinality $s + n \leq \frac{3n+p}{2} < p + n$, where $s$ is the cardinality of solution given by that algorithm for graph $G$ (recall that $G$ is 
an arbitrary graph used to form family $\mathcal{T'}(G,N_p)$).

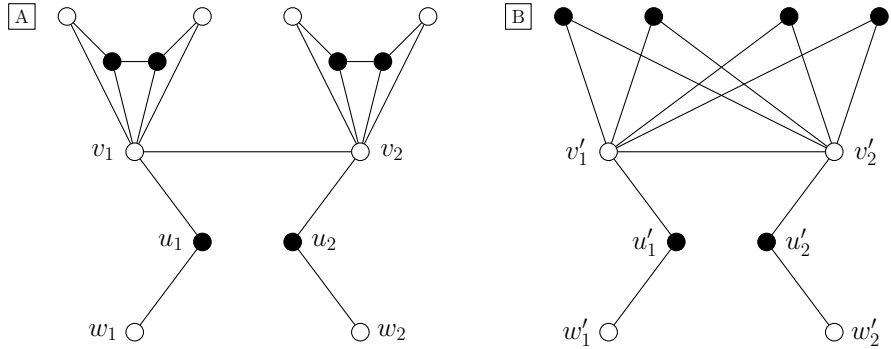
\begin{figure}[h]
\centering
\begin{tikzpicture}[scale=.6, transform shape]

\node[draw] at (-4,3) {A};
\node[draw] at (7,3) {B};

\node [draw, shape=circle] (s1) at  (-1.5,0) {};
\node [font=\Large] at (-2.2,0) {$v_1$};
\node [draw, shape=circle] (s2) at  (3.5,0) {};
\node [font=\Large] at (4.2,0) {$v_2$};
\node [draw, shape=circle, fill=black] (s3) at  (0,-2) {};
\node [font=\Large] at (-0.7,-2) {$u_1$};
\node [draw, shape=circle, fill=black] (s4) at  (2,-2) {};
\node [font=\Large] at (2.7,-2) {$u_2$};
\node [draw, shape=circle] (s5) at  (-1.5,-4) {};
\node [font=\Large] at (-2.2,-4) {$w_1$};
\node [draw, shape=circle] (s6) at  (3.5,-4) {};
\node [font=\Large] at (4.2,-4) {$w_2$};

\node [draw, shape=circle, fill=black] (p1) at  (-1,2) {};
\node [draw, shape=circle] (p2) at  (0,3) {};
\node [draw, shape=circle, fill=black] (p3) at  (-2,2) {};
\node [draw, shape=circle] (p4) at  (-3,3) {};
\draw(p2)--(p1)--(p3)--(p4);

\node [draw, shape=circle] (p5) at  (5,3) {};
\node [draw, shape=circle, fill=black] (p6) at  (4,2) {};
\node [draw, shape=circle, fill=black] (p7) at  (3,2) {};
\node [draw, shape=circle] (p8) at  (2,3) {};
\draw(p5)--(p6)--(p7)--(p8);

\draw(s5)--(s3)--(s1)--(s2)--(s4)--(s6);

\draw(s1)--(p1); \draw(s1)--(p2); \draw(s1)--(p3); \draw(s1)--(p4);

\draw(s2)--(p5); \draw(s2)--(p6); \draw(s2)--(p7); \draw(s2)--(p8);

\node [draw, shape=circle] (t1) at  (9,0) {};
\node [font=\Large] at (8.3,0) {$v'_1$};
\node [draw, shape=circle] (t2) at  (14,0) {};
\node [font=\Large] at (14.7,0) {$v'_2$};
\node [draw, shape=circle, fill=black] (t3) at  (10.5,-2) {};
\node [font=\Large] at (9.8,-2) {$u'_1$};
\node [draw, shape=circle, fill=black] (t4) at  (12.5,-2) {};
\node [font=\Large] at (13.2,-2) {$u'_2$};
\node [draw, shape=circle] (t5) at  (9,-4) {};
\node [font=\Large] at (8.3,-4) {$w'_1$};
\node [draw, shape=circle] (t6) at  (14,-4) {};
\node [font=\Large] at (14.7,-4) {$w'_2$};

\node [draw, shape=circle, fill=black] (t7) at  (8,3) {};
\node [draw, shape=circle, fill=black] (t8) at  (10,3) {};
\node [draw, shape=circle, fill=black] (t9) at  (13,3) {};
\node [draw, shape=circle, fill=black] (t10) at  (15,3) {};

\draw(t5)--(t3)--(t1)--(t2)--(t4)--(t6);

\draw(t2)--(t7)--(t1);
\draw(t2)--(t8)--(t1);
\draw(t2)--(t9)--(t1);
\draw(t2)--(t10)--(t1);

\end{tikzpicture}
\caption{\small In Figure A is shown a graph from family $\mathcal{T}(G,H)$, where 
$G \cong P_2$ ($V(G)=\{v_1,v_2\}$) and $H \cong P_4$. In Figure B is shown  graph 
$\mathcal{T'}(G,N_4)$, where $G \cong P_2$ ($V(G)=\{v'_1,v'_2\}$). The bold vertices 
represent the solution created by \textbf{Alg. 4}.}\label{fig4}
\end{figure}

A commonly studied generalization of our problem is the $k$-domination  problem. 
The $k$-domination number of a graph $G(V,E)$, $\gamma_k(G)$, is the minimum cardinality 
of a set $S \subset V$ such that any vertex in $V\setminus S$ is adjacent to at least 
$k$ vertices of $S$ (for further details on this problem, see, for instance, 
\cite{caro}). Note that for $k=1$ the $1$-domination  problem  is the same as
our domination problem. Foerster \cite{k_dominating_set}, presents a heuristic 
algorithm for $k\ge 1$ and shows that its approximation  ratio is $ln(\Delta+k)+1$, 
which coincides with the bound for Greedy Algorithm (see Equation \ref{4}), i.e., the algorithm from \cite{k_dominating_set} has the 
same approximation ratio as Greedy Algorithm for the classical domination problem. 
As for the above mentioned heuristics from \cite{parekh}, the former algorithm 
does not necessarily deliver a minimal dominating set, and there are families of 
graphs for which a dominating set formed by the algorithm in \cite{k_dominating_set}
contains more than $\frac{n}{2}$ vertices (see, for instance, an
example in Figure \ref{fig_1991}B). 

Recently,  Wang \emph{et al.} \cite{wang2} and \cite{wang1} have proposed the
heuristics for the weighted version MWDS of our domination problem.
Repeatedly, a frequency based scoring function is used to choose randomly the next 
vertex that is added to the current dominating set. The  algorithm from \cite{wang2}
initially, reduces the size of the MWDS problem and then at the second stage 
constructs the first approximation to the MWDS problem by making random choices. 
At the third stage,  the output of the second stage is reduced by eliminating some
redundant vertices applying a kind of purification procedure.  Because of the
probabilistic way for the selection of each next vertex, the heuristics has
no (deterministic) worst-case approximation ratio, but according to the reported
experimental results, it still has a good practical behavior. Nevertheless, the
purification stage still leaves redundant vertices (in particular, ones with no 
private neighbors). As a simple illustrative example, in Figure \ref{fig_mwds} 
we depict a graph for which the dominating set generated by Greedy Algorithm is
$\{1,2,3\}$. Our purification procedure eliminates redundant vertex 1, which will
be left in the dominating set constructed by the algorithm in \cite{wang2}.

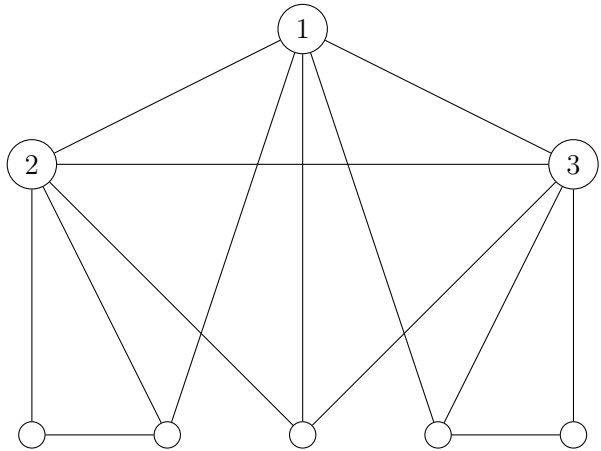
\begin{figure}[h]
\centering
\begin{tikzpicture}[scale=.9, transform shape]

\node [draw, shape=circle] (s1) at  (0,4) {2};
\node [draw, shape=circle] (s2) at  (8,4) {3};
\node [draw, shape=circle] (s3) at  (4,6) {1};

\draw(s1)--(s2)--(s3)--(s1);

\node [draw, shape=circle] (t1) at  (0,0) {};
\node [draw, shape=circle] (t2) at  (2,0) {};
\node [draw, shape=circle] (t3) at  (4,0) {};
\node [draw, shape=circle] (t4) at  (6,0) {};
\node [draw, shape=circle] (t5) at  (8,0) {};

\draw(s1)--(t1); \draw(s1)--(t2); \draw(s1)--(t3);
\draw(s2)--(t3); \draw(s2)--(t4); \draw(s2)--(t5);
\draw(s3)--(t2); \draw(s3)--(t3); \draw(s3)--(t4);

\draw(t1)--(t2); \draw(t4)--(t5);

\end{tikzpicture}
\caption{\small A graph $G$ with $\gamma(G)=2$.}\label{fig_mwds}
\end{figure}

\medskip

\subsection{Approximation ratios}

In this subsection we derive fours approximation ratios for our algorithms. 
By Remark \ref{obs-1},  all the graphs $G$ analyzed here will be such that 
$\gamma(G) > 2$. As already noted, the algorithm in \cite{k_dominating_set} obtains a k-dominating set with the approximation ratio less than $$\displaystyle\frac{ln\left(\frac{nk}{\gamma(G)}\right)}{\gamma(G)ln\left(\frac{\gamma(G)}{\gamma(G)-1}\right)}+1.$$ Now, carrying out an analysis similar to that in \cite{k_dominating_set}, we can see that, for $k=1$,  Greedy Algorithm  achieves the approximation ratio 

\begin{equation}
\rho \leq \displaystyle\frac{ln\left(\frac{n}{\gamma(G)}\right)}{\gamma(G)ln\left(\frac{\gamma(G)}{\gamma(G)-1}\right)}+1. \label{1}
\end{equation}\\

The following result is well known (see \cite{haynes1}): 

\begin{equation}
\gamma(G) \geq \frac{n}{\Delta+1}. 
\label{2}
\end{equation}

Since $\displaystyle\frac{1}{\gamma(G)ln\left(\frac{\gamma(G)}{\gamma(G)-1}\right)} \leq 1$ and (\ref{1}), we have

\begin{equation}
\rho \leq \frac{ln\left(\frac{n}{\gamma(G)}\right)}{\gamma(G)ln\left(\frac{\gamma(G)}{\gamma(G)-1}\right)} + 1 \leq ln\left(\frac{n}{\gamma(G)}\right)+1.
\label{3}
\end{equation}

 Thus, using (\ref{2}) and (\ref{3}), we obtain an approximation ratio with invariants $n$ and 
 $\Delta$

\begin{equation}
\rho \leq  ln(\Delta+1)+1.
\label{4}
\end{equation}

In the next, we derive an approximation ratio of Algorithm \ref{algoritmo2} solely in terms of the 
maximum vertex degree. Recall that $S^*$ is a minimal
dominating set returned by Algorithm \ref{algoritmo2} and that, by Theorem \ref{t-lemma-upper-bound}, 
$\frac{n}{\Delta+1} \leq |S^*| \leq \frac{n}{2}$. Since $|S^*|\geq \gamma(G)$, we obtain that 
an approximation ratio for Algorithm \ref{algoritmo2} is

\begin{equation}
\displaystyle\rho = \frac{|S^*|}{\gamma(G)} \leq \frac{\frac{n}{2}}{\frac{n}{\Delta+1}} = \frac{\Delta+1}{2}.
\label{5}
\end{equation}

Using the approximation ratios (\ref{4}) and (\ref{5}), we obtain an overall approximation ratio for
 Algorithm \ref{algoritmo2}:\\

\begin{center}
$\rho \leq \left\{\begin{array}{ll}
                                 \frac{\Delta + 1}{2}, & \mbox{if 1 $\leq \Delta \leq$ 4} \\[.3cm]
                                 ln(\Delta+1)+1, & \mbox{otherwise.}
                               \end{array}
\right.$
\end{center}

Note that, we need to verify that $\frac{\Delta + 1}{2} \leq ln(\Delta + 1)+1$. 
Let $f(\Delta)=e^{\frac{\Delta - 1}{2}} - \Delta -1$ be a function with $\Delta \in \mathbb{R}$. It is easily to see that function $f$ is 0 when $\Delta \approx 4.3567$. Since there exists a change in the sign of function $f$ when $\Delta = 4.3567$, we have that $f(\Delta) > 0$ when $\Delta > 4$. Otherwise  $f(\Delta) \leq 0$ when 
$\Delta \leq 4$.

Using a similar analysis at described above, we obtain a stronger overall bound in terms of  an optimal value $\gamma(G)$.

\begin{center}
$\rho \leq \left\{\begin{array}{ll}
                                 \frac{\Delta + 1}{2}, & \mbox{if $n \geq \gamma(G)e^{\frac{\Delta-1}{2}}$
                                 } \\[.3cm] \displaystyle ln\left(\frac{n}{\gamma(G)}\right)+1, & \mbox{otherwise.}
                               \end{array}
\right.$
\end{center}

Note that 
$$\displaystyle ln\left(\frac{n}{\gamma(G)}\right)+1 \leq ln(\Delta+1)+1.$$
 
Firstly, we need to prove that $\frac{\Delta+1}{2} \leq ln\left(\frac{n}{\gamma(G)}\right)+1$ if and only if $n \geq \gamma(G)e^{\frac{\Delta-1}{2}}$.

\begin{center}
$\displaystyle\frac{\Delta-1}{2} \leq ln\left(\frac{n}{\gamma(G)}\right)$  $\iff$ $\displaystyle e^{\frac{\Delta-1}{2}} \leq \frac{n}{\gamma(G)}$ $\iff$ 
$\displaystyle \gamma(G)e^{\frac{\Delta-1}{2}} \leq n$.
\end{center}

Consequently, if $n \geq \gamma(G)e^{\frac{\Delta-1}{2}}$, then $\rho \leq \frac{\Delta+1}{2}$, and $\rho \leq ln\left(\frac{n}{\gamma(G)}\right)+1$ otherwise.

\bigskip

Notice again that the above derived approximation ratios have an evident advantage over the 
earlier known one that it employs only invariant graph parameters and can easily be
calculated.

Finally, among the open problems arising from the analysis, the following should be highlighted.

\begin{itemize}
\item Both optimality conditions from Proposition \ref{prop3} are easily verifiable. At the same time,  it is intuitively clear that for either of the conditions  Greedy will have a performance ratio 
	essentially better than those that we obtained for the general case. Hence, it would  be important to derive tight bounds for Greedy for the two conditions. 

\item We have shown that Greedy Algorithm obtains an optimal solution for the graphs in the family 
$\mathcal{F}_1$. It would be a good idea to search for a wider class of families for which 
Algorithm \ref{algoritmo2} is optimal.

\end{itemize}

\section{Computational experiments}

In this final section we describe our computation experiments. We implemented our algorithms in C++ using Windows 10 operative system for 64 bits on a personal computer with Intel Core i7-9750H (2.6 GHz) and 16 GB of RAM DDR4. We generated the problem instances using different pseudo-random methods to generate the graphs. The order and the size of 
each  instance  was generated randomly using function $random()$ within the respective ranges. Each new edge was added in between two yet non-adjacent vertices randomly until the  corresponding size was attained.

We report our results for 835 instances which are now publicly available in \cite{bdparra}, see in Table \ref{table1}. These instances were created with the intention of verify the efficiency of  the purification stage. 
We did not include results for graphs with an excessive number of edges, 
where the corresponding dominant sets were very small compared to the number 
of nodes, hence these dominant sets were very close to the optimum (no purification was
actually required). We may observe that Stage 1 has delivered the solutions with the size essentially smaller than the known upper bounds $\frac{n}{2}$ and $n-\Delta$ on the optimum objective value. Over all the tested problem instances, in average, the reduction of about 12.24\% of the size of the dominating sets at Stage 2 was observed. Another important 
observation is that the reduction of the size of the dominating sets at the purification stage becomes more notable as the order of the graphs increases. \\

\begin{longtable}[c]{|l|l|l|l|ll|}
	\hline
	\multirow{2}{*}{No.} & \multicolumn{1}{c|}{\multirow{2}{*}{$|V(G)|$}} & \multicolumn{1}{c|}{\multirow{2}{*}{$|E(G)|$}} & \multicolumn{1}{c|}{\multirow{2}{*}{Greedy Algorithm}} & \multicolumn{2}{c|}{Overall Algorithm} \\ \cline{5-6} 
	& \multicolumn{1}{c|}{} & \multicolumn{1}{c|}{} & \multicolumn{1}{c|}{} & \multicolumn{1}{c|}{$|S^*|$} & \multicolumn{1}{c|}{Purification} \\ \hline
	\endhead
	1 & 5800 & 5849 & 2510 & \multicolumn{1}{l|}{2315} & 195 \\ \hline
	2 & 5900 & 5940 & 2550 & \multicolumn{1}{l|}{2368} & 182 \\ \hline
	3 & 6000 & 6010 & 2614 & \multicolumn{1}{l|}{2427} & 187 \\ \hline
	4 & 6050 & 6143 & 2635 & \multicolumn{1}{l|}{2441} & 194 \\ \hline
	5 & 6150 & 6237 & 2640 & \multicolumn{1}{l|}{2449} & 191 \\ \hline
	6 & 6200 & 6310 & 2694 & \multicolumn{1}{l|}{2490} & 204 \\ \hline
	7 & 6250 & 6340 & 2732 & \multicolumn{1}{l|}{2532} & 200 \\ \hline
	8 & 6300 & 6453 & 2704 & \multicolumn{1}{l|}{2504} & 200 \\ \hline
	9 & 6350 & 6491 & 2722 & \multicolumn{1}{l|}{2529} & 193 \\ \hline
	10 & 6450 & 6465 & 2841 & \multicolumn{1}{l|}{2628} & 213 \\ \hline
	11 & 6550 & 6586 & 2867 & \multicolumn{1}{l|}{2648} & 219 \\ \hline
	12 & 6700 & 6766 & 2892 & \multicolumn{1}{l|}{2666} & 226 \\ \hline
	13 & 6750 & 6870 & 2952 & \multicolumn{1}{l|}{2721} & 231 \\ \hline
	14 & 6800 & 6853 & 2945 & \multicolumn{1}{l|}{2724} & 221 \\ \hline
	15 & 6900 & 6940 & 2988 & \multicolumn{1}{l|}{2774} & 214 \\ \hline
	16 & 6950 & 6956 & 3020 & \multicolumn{1}{l|}{2807} & 213 \\ \hline
	17 & 7050 & 7142 & 3055 & \multicolumn{1}{l|}{2838} & 217 \\ \hline
	18 & 7100 & 7135 & 3067 & \multicolumn{1}{l|}{2851} & 216 \\ \hline
	19 & 7300 & 7311 & 3159 & \multicolumn{1}{l|}{2929} & 230 \\ \hline
	20 & 7350 & 7474 & 3177 & \multicolumn{1}{l|}{2944} & 233 \\ \hline
	21 & 7450 & 7497 & 3217 & \multicolumn{1}{l|}{2980} & 237 \\ \hline
	22 & 7500 & 7535 & 3270 & \multicolumn{1}{l|}{3027} & 243 \\ \hline
	23 & 7600 & 7734 & 3273 & \multicolumn{1}{l|}{3024} & 249 \\ \hline
	24 & 7650 & 7696 & 3331 & \multicolumn{1}{l|}{3095} & 236 \\ \hline
	25 & 7700 & 7716 & 3332 & \multicolumn{1}{l|}{3096} & 236 \\ \hline
	26 & 7750 & 7806 & 3343 & \multicolumn{1}{l|}{3107} & 236 \\ \hline
	27 & 7850 & 7884 & 3360 & \multicolumn{1}{l|}{3111} & 249 \\ \hline
	28 & 7900 & 7932 & 3415 & \multicolumn{1}{l|}{3161} & 254 \\ \hline
	29 & 8000 & 8126 & 3466 & \multicolumn{1}{l|}{3215} & 251 \\ \hline
	30 & 8250 & 8300 & 3612 & \multicolumn{1}{l|}{3358} & 254 \\ \hline
	31 & 8350 & 8409 & 3647 & \multicolumn{1}{l|}{3385} & 262 \\ \hline
	32 & 8400 & 8517 & 3629 & \multicolumn{1}{l|}{3360} & 269 \\ \hline
	33 & 8550 & 8606 & 3740 & \multicolumn{1}{l|}{3458} & 282 \\ \hline
	34 & 8600 & 8634 & 3711 & \multicolumn{1}{l|}{3450} & 261 \\ \hline
	35 & 8700 & 8809 & 3736 & \multicolumn{1}{l|}{3471} & 265 \\ \hline
	36 & 8800 & 8815 & 3776 & \multicolumn{1}{l|}{3504} & 272 \\ \hline
	37 & 8850 & 8864 & 3849 & \multicolumn{1}{l|}{3577} & 272 \\ \hline
	38 & 8900 & 9020 & 3840 & \multicolumn{1}{l|}{3565} & 275 \\ \hline
	39 & 8950 & 9009 & 3883 & \multicolumn{1}{l|}{3604} & 279 \\ \hline
	40 & 9100 & 9106 & 3971 & \multicolumn{1}{l|}{3691} & 280 \\ \hline
	\caption{The results for the randomly generated graphs.}
	\label{table1}
\end{longtable}

\section{Acknowledgments}

The authors are grateful to two anonymous referees. In particular, the authors
would like to thank  referee 2 for extremely extensive and extremely detailed comments 
which have lead to essential modifications in the originally submitted manuscript.   

\bibliographystyle{elsarticle-num}

\begin{thebibliography}{00}

\bibitem{alon} Alon, N., and Gutner, S. Linear time algorithms for finding a dominating set of fixed size in degenerated graphs. \emph{Algorithmica}, {\bf 2009}, 54(4), 544. https://doi.org/10.1007/s00453-008-9204-0.

\bibitem{berge-1962} Berge, C. (1962). The theory of graphs and its applications, \emph{Methuen and Co}. Ltd., London.

\bibitem{campan} Campan, A., Truta, T. M., and Beckerich, M. Fast Dominating Set Algorithms for Social Networks. \emph{In MAICS}, {\bf 2015}, (pp. 55-62).

\bibitem{caro} Caro, Y., and Roditty, Y. (1990). A note on the k-domination number of a graph. \emph{International Journal of Mathematics and Mathematical Sciences}, 13(1), 205-206.

\bibitem{chebik} Chleb\'ik, M., and Chleb\'ikov\'a, J. Approximation hardness of dominating set problems. \emph{In European Symposium on Algorithms}, {\bf 2004}, (pp. 192-203). Springer, Berlin, Heidelberg. https://doi.org/10.1016/j.ic.2008.07.003

\bibitem{chvatal}
Chv\'atal, V. A Greedy Heuristic for the Set Covering problem. 
\emph{Mathematics of Operations Research} 4, (1979) 233-235. 

\bibitem{cockayne} Cockayne, E., Goodman, S., \& Hedetniemi, S. A linear algorithm for the domination number of a tree. \emph{Information Processing Letters}, {\bf 1975}, 4(2), 41-44. https://doi.org/10.1016/0020-0190(75)90011-3

\bibitem{eubank} Eubank, S., Kumar, V. S., Marathe, M. V., Srinivasan, A., and Wang, N. Structural and algorithmic aspects of massive social networks. In \emph{Proceedings of the fifteenth annual ACM-SIAM symposium on Discrete algorithms}, {\bf 2004}, (pp. 718-727). 0-89871-558-X

\bibitem{erlebach} Erlebach, T., and Van Leeuwen, E. J. Domination in geometric intersection graphs. In \emph{Latin American Symposium on Theoretical Informatics}, {\bf 2008}, (pp. 747-758). Springer, Berlin, Heidelberg. https://doi.org/10.1007/978-3-540-78773-064

\bibitem{k_dominating_set} Foerster, K. T. Approximating fault-tolerant domination in general graphs. \emph{In Proceedings of the Meeting on Analytic Algorithmics and Combinatorics}, {\bf 2013}, (pp. 25-32). Society for Industrial and Applied Mathematics. 10.1137/1.9781611973037.4

\bibitem{flake} Flake, G. W., Lawrence, S. and Giles, C. L. Efficient Identification of Web Communities, \emph{Conference on Knowledge Discovery and Data Mining} {\bf 2000}, 150-160, 10.1145/347090.347121.

\bibitem{formin} Fomin, F. V., Grandoni, F., and Kratsch, D. A measure and conquer approach for the analysis of exact algorithms. \emph{Journal of the ACM (JACM)}, {\bf 2009}, 56(5), 25. 10.1145/1552285.1552286

\bibitem{GJ} Garey, M.R. and Johnson, D.S. Computers and
Intractability: A Guide to the Theory of NP-completeness.
Freeman, San Francisco, 1979.

\bibitem{gaspers} Gaspers, S., Kratsch, D., Liedloff, M., and Todinca, I. Exponential time algorithms for the minimum dominating set problem on some graph classes. \emph{ACM Transactions on Algorithms (TALG)}, {\bf 2009}, 6(1), 9. 10.1145/1644015.1644024

\bibitem{gibson} Gibson, M., and Pirwani, I. A. Approximation algorithms for dominating set in disk graphs, {\bf 2010}. arXiv preprint arXiv:1004.3320.

\bibitem{golumbic} Golumbic, M. C., and Rotics, U. (2000). On the clique-width of some perfect graph classes. \emph{International Journal of Foundations of Computer Science}, 11(03), 423-443.

\bibitem{guha} Guha, S., and Khuller, S. Approximation algorithms for connected dominating sets. \emph{Algorithmica}, {\bf 1998}, 20(4), 374-387. http://dx.doi.org/10.1007/s00453-007-9015-8

\bibitem{haynes1} Haynes, T. W., Hedetniemi, S. T., and Slater, P. J. \emph{Domination in Graphs (Advanced Topics)}, {\bf 1998}, Marcel Dekker Publications, New York, 9780824700348.

\bibitem{haynes2} Haynes, T. W., Hedetniemi, S. T., and Slater, P. J. Fundamentals of domination in graphs, {\bf 1998}, volume 208 of \emph{Monographs and Textbooks in Pure and Applied Mathematics.}, 9780824700331.

\bibitem{haynes3} Haynes, T., Knisley, D., Seier, E., and Zou, Y. A quantitative analysis of secondary RNA structure using domination based parameters on trees. \emph{BMC bioinformatics} {\bf 2006}, 7(1), 108. https://doi.org/10.1186/1471-2105-7-108

\bibitem{huntIII} Hunt III, H. B., Marathe, M. V., Radhakrishnan, V., Ravi, S. S., Rosenkrantz, D. J., and Stearns, R. E. NC-approximation schemes for NP-and PSPACE-hard problems for geometric graphs. \emph{Journal of algorithms}, {\bf 1998}, 26(2), 238-274. https://doi.org/10.1006/jagm.1997.0903

\bibitem{kelleher} Kelleher, L. L., and Cozzens, M. B. Dominating sets in social network graphs. \emph{Mathematical Social Sciences} {\bf 1988}, 16(3), 267-279, https://doi.org/10.1016/0165-4896(88)90041-8.

\bibitem{kikuno} Kikuno, T., Yoshida, N., and Kakuda, Y. A linear algorithm for the domination number of a series-parallel graph. \emph{Discrete Applied Mathematics}, {\bf 1983}, 5(3), 299-311. https://doi.org/10.1016/0166-218X(83)90003-3

\bibitem{kim} Kim, B. J., Liu, J., Um, J., and Lee, S. I. Instability of defensive alliances in the predator-prey model on complex networks. \emph{Physical Review E}, {\bf 2005}, 72(4). 10.1103/PhysRevE.72.041906

\bibitem{kuhn1} Kuhn, F., Moscibroda, T., and Wattenhofer, R. What cannot be computed locally!. \emph{In Proceedings of the twenty-third annual ACM symposium on Principles of distributed computing}, {\bf 2004} (pp. 300-309). ACM. 10.1145/1011767.1011811

\bibitem{kuhn2} Kuhn, F., and Wattenhofer, R. Constant-time distributed dominating set approximation. \emph{Distributed Computing} {\bf 2005}, 17(4), 303-310. https://doi.org/10.1007/s00446-004-0112-5

\bibitem{lerchs} Lerchs, H. On cliques and kernels. Department of Computer Science, University of Toronto, {\bf 1971}.

\bibitem{lick} Lick, D. R., and White, A. T. k-Degenerate graphs. \emph{Canadian Journal of Mathematics}, {\bf 1970}, 22(5), 1082-1096. https://doi.org/10.4153/CJM-1970-125-1

\bibitem{nieberg} Nieberg, T., Hurink, J., and Kern, W. Approximation schemes for wireless networks. \emph{ACM Transactions on Algorithms}, {\bf 2008} (TALG), 4(4), 49. 10.1145/1383369.1383380

\bibitem{ore} Ore, O. Theory of Graphs, \emph{A. M. S. Colloquium Publications 38} (1962), 270 pages.

\bibitem{parekh} Parekh, A. K. Analysis of a greedy heuristic for finding small dominating sets in graphs. \emph{Information processing letters}, {\bf 1991}, 39(5), 237-240. https://doi.org/10.1016/0020-0190(91)90021-9

\bibitem{powel} Powel, M. Alliance in graph, \emph{Proc. on th 255 of the USA Military Academy}, {\bf 2004}, 1350-1415.

\bibitem{pradhan} Pradhan, D., Jha, A., and Banerjee, S. The secure domination problem in cographs. \emph{Information Processing Letters}, {\bf 2019}. 10.1016/j.ipl.2019.01.005

\bibitem{raz} Raz, R., and Safra, S. (1997, May). A sub-constant error-probability low-degree test, and a sub-constant error-probability PCP characterization of NP. \emph{In STOC}, {\bf 1997} (Vol. 97, pp. 475-484). 10.1145/258533.258641

\bibitem{schoenmakers} Schoenmakers, L. A. M. A new algorithm for the recognition of series parallel graphs. \emph{Centrum voor Wiskunde en Informatica}, {\bf 1995}.

\bibitem{siebertz} Siebertz, S. Greedy domination on biclique-free graphs. \emph{Information Processing Letters}, {\bf 2019}. 10.1016/j.ipl.2019.01.006

\bibitem{rooij} Van Rooij, J. M., and Bodlaender, H. L. Exact algorithms for dominating set. \emph{Discrete Applied Mathematics}, {\bf 2011}, 159(17), 2147-2164. https://doi.org/10.1016/j.dam.2011.07.001

\bibitem{vazirani} Vazirani, V. V. Approximation Algorithms. \emph{Springer-Verlag} New York, Inc., New York, NY, USA, 2001.

\bibitem{wang} Wang, Y., Cai, S., Chen, J., and Yin, M. A Fast Local Search Algorithm for Minimum Weight Dominating Set Problem on Massive Graphs.\emph{ In IJCAI}, {\bf 2018}, (pp. 1514-1522). 10.24963/ijcai.2018/210

\bibitem{wang2} Wang, Y., Cai, S., and Yin, M. Local search for minimum weight dominating set with two-level configuration checking and frequency based scoring function. \emph{Journal of Artificial Intelligence Research}, {\bf 2017}, 58, 267-295. 10.24963/ijcai.2017/728

\bibitem{wang1} Wang, Y., Cai, S., Chen, J., and Yin, M. A Fast Local Search Algorithm for Minimum Weight Dominating Set Problem on Massive Graphs. \emph{In IJCAI}, {\bf 2018}, (pp. 1514-1522). https://doi.org/10.24963/ijcai.2018/210

\bibitem{bdparra} Parra Inza, E., “Random Graph (1)”, Mendeley Data, {\bf 2021} V1, doi: 10.17632/rr5bkj6dw5.1


\end{thebibliography}

\end{document}